\def\nn{\mathbb{N}}
\def\NN{\mathbb{N}}
\def\kk{\mathbb{K}}
\def\qq{\mathbb{Q}}
\def\kx{\kk[\mathbf{x}]}
\def\Ac{\mathcal{A}}
\def\Jc{\mathcal{J}}
\def\mc{\mathcal{M}}
\def\Hc{\mathcal{H}}
\def\Rc{\mathcal{R}}
\def\Mon{\mathcal{M}}
\def\x{\mathbf{x}}
\def\tb{\mathbf{t}}
\def\lb{\lambda}
\def\fb{\mathbf{f}}
\def\Lb{\Lambda}
\def\supp{\mathrm{supp}}
\def\lcm{\mathrm{lcm}}
\def\<{\langle}
\def\>{\rangle}
\def\Prod{\Pi}
\def\Prj{\pi}
\def\Prje{\pi^{e}}
\def\deg{\mathrm{deg}}
\def\rewriting{{rewriting~}}
\def\negskip{}
\newcommand{\eqref}[1]{(\ref{#1})}
\newtheorem{lemme}{Lemma}[section]
\newtheorem{defi}[lemme]{Definition}
\newtheorem{assume}[lemme]{Assumption}
\newtheorem{corol}[lemme]{Corrolary}
\newtheorem{theorem}[lemme]{Theorem}
\newtheorem{example}[lemme]{Example}
\newtheorem{remark}[lemme]{Remark}
\newenvironment{proof}{{\bf Proof}.}{\hfill $\Box$}
\def\grob{Gr\"obner }
\title{Stable Normal Forms for Polynomial System Solving}
\author{\begin{tabular}{ll}
Bernard Mourrain & Philippe Tr\'ebuchet \\
\begin{minipage}{7cm}
GALAAD, INRIA M\'editerran\'ee\\
{2004 Route des Lucioles, BP 93,}\\
{06902 Sophia Antipolis, Cedex}, {France}\\
\texttt{Bernard.Mourrain@inria.fr}
\end{minipage}
&
\begin{minipage}{5cm}
{UPMC,LIP6, Equipe APR}\\
{4 place jussieu}\\
{75015 Paris}, {France}\\
\texttt{Philippe.Trebuchet@lip6.fr}
\end{minipage}
 \end{tabular}
}
\begin{document}
\maketitle

\begin{abstract}
  This paper describes and analyzes a method for computing border bases of a
  zero-dimensional ideal $I$. The criterion used in the computation involves
  specific commutation polynomials and leads to an algorithm and an
  implementation extending the one provided in \cite{mt:gnfpss-issac-05}.
  This general border basis algorithm weakens the monomial ordering
  requirement for \grob bases computations. It is up to date the most general
  setting for representing quotient algebras, embedding into a single
  formalism Gr\"obner bases, Macaulay bases and new
  representation that do not fit into the previous categories. With this
  formalism we show how the syzygies of the border basis are generated
  by commutation relations. We also show that our construction of normal form
  is stable under small perturbations of the ideal, if the number of
  solutions remains constant. This new feature for a symbolic algorithm has a
  huge impact on the practical efficiency as it is illustrated by the
  experiments on classical benchmark polynomial systems, at the end of the
  paper.
\end{abstract}
{\bf Keywords:}
{Multivariate polynomial, quotient algebra,  normal form, border basis, 
 root-finding, symbolic-numeric computation.}

\section{Introduction}
Solving polynomial systems is the cornerstone of many applications in
domains such as robotics, geometric modeling, signal processing,
chromatology, structural molecular biology etc. In these problems,
the system has, most of the time, finitely many solutions and the
equations often appear with approximate coefficients. 

From a computational point of view, it is an actual challenge to develop
efficient and stable methods to solve such problems. First backward stability
is expected. The computed solutions should be the exact solutions of a system
in the neighborhood of the input system. Efficiency is also mandatory to
tackle the encountered polynomial systems. One can expect for instance that
the behavior of the method depends mainly on the number of solutions, and
partially on other extrinsic parameters such as the number of variables.

To handle the backward stability issue, one may consider classical numerical
methods such as Newton-like iterations. However these local methods do not
provide any guarantee of global convergence nor a complete description of all
the roots. Algebraic methods, on the contrary, handle all the roots
simultaneously. They reduce the problem to computing the structure of the
quotient algebra $\Ac$ of the polynomial ring modulo the ideal $I$ generated
by  the input  system \cite{irsea-em-07}.
Such a structure is given by a basis $B$ of $\Ac$ as a vector
space, and the tables of multiplication in $\Ac$.
Equivalently, it can be described by an algorithm of projection of the
ring of polynomials $\kx$ onto the vector space $\<B\>$ generated by $B$,
along the ideal $I$. We call such a projection, a normal form for the ideal $I$.  

From the knowledge of the multiplication tables, we deduce
either an exact encoding of the roots through a rational univariate
representation 
\cite{Rouillier:99}, \cite{GiLeSa:1999:kro-focm99},
or a numerical approximation by eigenvector
computation, \cite{AuSt88}, \cite{BMr98}, 
\cite{Stetter04}. 
Since the eigencomputation can be considered as a numerically well-controlled
process \cite{GLVL96}, the challenge becomes now to compute efficiently and
in a stable way the quotient algebra structure $\Ac$.

In the family of algebraic methods, resultant-based techniques (see
eg. \cite{Mac02}, \cite{EMres99}, \cite{bem-upoicagd-03}) exploit the
properties of coefficients matrices of monomial multiples of the input equations
in some specific degree. The table of multiplications are obtained by explicit
Schur complements in these matrices \cite{EMres99}. Their construction is deeply linked to
the geometry of the underlying variety that make these methods very tolerant
against small perturbations. Unfortunately they heavily rely on genericity
hypotheses that reduce their applicability. Moreover, the size of the
constructed matrices usually grows exponentially with the number of variables.

To avoid such a pitfall, so-called H-bases have also been studied
\cite{Mac16}, \cite{MoSa02}. They proceed degree by degree and 
stop when the terms of highest degree of the computed polynomials generate
the terms of highest degree of all the ideal $I$. The stopping criterion
requires the computation of generators of the syzygies of these highest
degree terms. Though it also yields a basis of the quotient ring $\Ac$,
without any apriori knowledge on its dimension, practically speaking, it also
suffers from the swelling of the size of the linear systems to be solved.

The approach can be refined further by using a grading for which the highest
term of a polynomial is a monomial. This leads to Gr\"obner basis computation
(see eg. \cite{CLO92}). This approach also yields a basis of the quotient
algebra $\Ac$ and a normal form for $I$. As in the other methods, their computation can be seen as just a
triangulation of a certain matrix. Unfortunately these methods suffer from
unavoidable instability: the monomial ordering attached to the Gr\"obner
basis make the pivot selection strategy in the triangulation depend only
on the symbolic structure of the rows of the matrix, and not on the
numerical values of the coefficients appearing in these rows.  This {\em can}
lead to artificial unwanted singularities in the representation of the
quotient ring (compare for instance the degree reverse lexicographic
Gr\"obner basis of 
$p_1=ax_1^2+bx_2^2+\varepsilon_1x_1x_2$,
$p_2=cx_1^2+dx_2^2+\varepsilon_2x_1x_2$ with $\varepsilon_1=\varepsilon_2=0$
and $\varepsilon_1\not=0$, $\varepsilon_2\not=0$).

To circumvent this artificial difficulty, a new approach based on a normal
form criterion which involves commutation relations (for bases $B$ connected
to $1$) was first proposed in \cite{BMnf99}.  Further investigations of this
technique \cite{BMPhT00}, \cite{MT02synasc}, including the Ph.D. dissertation
\cite{PhT02} lead to a first version
\cite{mt:gnfpss-issac-05} of a normal form algorithm, which allows to construct
efficiently and in a stable way zero-dimensional quotient algebra
representations. Similar investigations were also pursued in
\cite{kkr:aavbb-05}, \cite{kekr:cbb-jpaa-05}, but in the more restrictive
case where the basis is stable by division. Other investigations related to the
stabilization of the normal form process can also be found in \cite{Stetter04}.

In this paper, we discuss border basis algorithms in the general sense,
that is for bases $B$ which are connected to $1$. See \cite{snc:pdsncbbm-06}
for an introductory presentation of their properties.
As in \cite{FGLM93} or \cite{Faug99}, our approach is based on linear
algebra tools. As for H-bases, we use a grading of the polynomial ring, but the
construction is optimized in the spirit of \cite{BMPhT00}. 
We describes an efficient criterion based on commutation polynomials to check
the normal form property. This leads to an algorithm, as presented in
\cite{mt:gnfpss-issac-05}, which has been improved to treat
optimally the case when a syzygy of the components of highest degree is
found, which is not a syzygy of the corresponding polynomials. We prove that
the syzygies of a (general) border basis are generated by the commutation
relations, giving a short and concise answer to a conjecture in
\cite{kekr:cbb-jpaa-05} for basis stable by division.
Meanwhile, works related to this conjecture for this special case 
were also investigated in \cite{Huibregtse06}.
Regarding the numerical stability of border bases, we prove that our
construction of normal form is stable against small perturbations of
the input system, if the number of solutions remains constant. 

The paper is structured as follows. In the next section, we recall the
notations, used in section \ref{sec:normform} to prove the stopping criterion
for generalized normal forms. In section \ref{sec:syzygy}, we show how to
recover the syzygies from the commutation relations.  In section \ref{sec:algo},
we recall briefly the maind idea of the algorithm described in \cite{mt:gnfpss-issac-05}. 
In section \ref{sec:stability}, we analyze the stability of this
algorithm from a symbolic-numeric perspective.  Finally, we show the
efficiency of our implementation and its numerical behavior on classical
polynomial benchmarks.

\section{Notations}

We recall some of the definitions stated in \cite{BMnf99},
\cite{BMPhT00}, \cite{MT02synasc} and add a few more that we will need in
the sequel. 

Let $\kk$ be an effective field. The ring of $n$-variate polynomials over
$\kk$ will be denoted by $R$, $R=\kx=\kk[x_{1},\ldots,x_{n}]$. We consider
$n$-variate polynomials $f_{1},\ldots, f_{s}\in R$. Our goal is to solve the
system of equations $f_{1}=0,\ldots, f_{s}=0$ over the algebraic closure
$\overline{\kk}$ of $\kk$.  These polynomials generate an ideal of $\kx$ that
we call $I$.  The quotient of $\kx$ modulo $I$ will be denoted by $\Ac$. From
now on, we suppose that {\em $I$ is zero dimensional} so that $\Ac$ is a
finite dimensional $\kk$-vector space.  The roots, with coordinates in the
algebraic closure of $\kk$, will be denoted by $\zeta_1,\ldots, \zeta_{d}$,
with $\zeta_{i}=(\zeta_{i,1},\ldots,
\zeta_{i,n}) \in \overline{\kk}^{n}$.

The support ${\rm \supp}(p)$ of a polynomial $p\in \kx$ is the set of
monomials appearing in $p$ with non-zero coefficients. Given a set $S$ of
elements of $\kx$, we denote by $\<S\>$ the $\kk$-vector space spanned by the
elements of $S$.  Finally, we denote the set of all the monomials in the
variables $\x=(x_{1},\ldots,x_{n})$ by $\Mon$. 
A term is an element of the form $\lambda\cdot m$ with $\lambda\in
\kk-\{0\}$ and $m\in \Mon$. For a subset $S$ of $\Mon$,
we will denote by $S^c$ the set-theoretical complement of the set $S$ in
$\Mon$. For any monomial $m\in \Mon$, $m\cdot\Mon$ denotes the set of
all monomial multiples of $m$.

For any subset $S$
of $R$, we denote by $S^+$ or $D(S)$ the set $S^{+}= S \cup x_{1}\, S \cup \cdots \cup
x_{n}\, S$, $\partial S= S^{+} \backslash S$.  
$S^{+}$ is called the {\em prolongation} of $S$.
For any $k\in \nn$, $S^{[k]}$ is $S^{\stackrel{k\ \mathrm{times}}{+\cdots+}}$
the result of applying $k$ times the operator $^{+}$ on $S$. By convention, $S^{[0]}$ is $S$.

If $B \subset \Mon$ contains $1$, for any monomial $m \in \Mon$, there exists $k$ such
that $m\in B^{[k]}$. We say that a monomial $m$ is of $B$-index $k$ if $m \in
B^{[k]}-B^{[k-1]}$, and we denote it by $\delta_{B}(m)$.

A set of monomials $B$ is said to be connected to $1$, if and only if,
for every monomial $m$ in $B$, there exists a finite sequence of
variables $(x_{i_{j}})_{j\in [1,l]}$ such that
$1\in B$,
$\forall l'\in [1,l],\ \Prod_{j\in[1,l']}x_{i_{j}}\in B$
 and $\Prod_{j\in [1,l]}\, x_{i_{j}}\ =m$.

A set of monomials $B$  is stable by division, if for any $m \in
B$ and any variable $x_{i}$ such that $m=x_{i}\, m'$ ($m'\in \Mon$), we have $m'$ in $B$.
Remark that a set $B$ stable by division, is connected to $1$.

\begin{defi}
Let $\Lambda$ be a monoid with a well order relation $\prec$, such that:
$$
\forall \alpha, \beta, \gamma \in \Lambda,\ \alpha\prec\beta\Rightarrow 
        \gamma+\alpha\prec \gamma+\beta
$$
A $(\Lambda,\prec)$-grading of $\kx$ is the decomposition of $\kx$ as the
direct sum: 
$\kx=\bigoplus_{\lb \in\Lambda}\kx_{[\lb]}$,
with the following property:
$$\forall f\in\kx_{[\alpha]},\ g\in\kx_{[\beta]}
\Rightarrow f\,g\in\kx_{[\alpha+\beta]}.
$$ 
We denote by degree of $f$ or $\deg_\Lambda(f)$, or $\deg(f)$ (when no
confusion is possible) and by $\Lambda(f)$, the following element of $\Lambda$: 
$$
\Lb(f)=\min\{\lb \in\Lambda\, |\,
f\in \bigoplus_{\lb' \preceq \lb}\kx_{[\lb']}\}.
$$
\end{defi} 
For any set $V \subset \kx$, let $V_{\lb} =\bigoplus_{\lb' \preceq \lb}\ 
\kx_{[\lb']} \cap V$.  For any $\lb \in \Lb$, let $\lb^{+} = \min
\{\lb'\in \Lb;\ \kx_{\lb}^{+} \subset \kx_{\lb'} \}$ and let $\lb^{-}
= \max \{\lb'\in \Lb;\ \kx_{\lb'}^{+} \subset \kx_{\lb} \}.$

In order not to be confused with different notions of degree, for any 
monomial $m$ of $\kx$, we define the size of $m$, denoted by $|m|$, to be the
integer $d$ such that $m=x_{i_1}\cdots x_{i_d}$, which itself imposes a grading.

Another classical grading is the one associated with a
monomial ordering, where $\Lambda =\NN^{n}$ and $\prec$ is a monomial
order (see \cite[p. 328]{Eis94}) such that for all $\alpha=(\alpha_{1},\ldots,
\alpha_{n}) \in \NN^{n}$, we have
$\kx_{[\alpha]} =\kk\, x_{1}^{\alpha_{1}} \cdots x_{n}^{\alpha_{n}}.$
\begin{defi}\label{def:reducinggrad} 
  We say that $\Lambda$ is a reducing grading if $\Lambda$ is a
  grading and if we have the property: for all monomials
  $m,m'\in \Mon$, such that $m'$ divides strictly $m$,
$\Lb(m')  \prec \Lb(m),\ \Lb(m') \neq \Lb(m).$
\end{defi}
Both  gradings induced by the classical degree and a monomial ordering are
reducing grading.
Hereafter, {\em we will denote by $\Lb$ a reducing grading}.
\begin{defi}
A \rewriting family $F$ for a monomial set $B$ is a set of
polynomials  $F=\{f_i\}_{i\in \mathcal{I}}$ such that 
$\supp(f_i)\subset B^+$,
$f_i$ has exactly {\bf one} monomial $\gamma(f_i)$ 
(also called the {\it leading monomial} of $f_i$) in  $\partial B$,
if  $\gamma(f_i)=\gamma(f_j)$ then $i=j$,
\end{defi}
Remark that the elements of $F$ can be seen as rewriting
rules for the leading monomial using monomials of $B$.
\begin{defi}\label{defnormfam}
A reducing family $F$ of degree $\lb \in \Lb$ for a set $B$ 
is a set of polynomials such that 
$F$ is a \rewriting family for $B$,
$\forall m \in \partial B$ of  degree at most $\lb$, $\exists
f\in F \mid \gamma(f)=m$. 
\end{defi}
{}For the set $B=\{1,x_0,x_1,x_0x_1\}$, the set of polynomials
$F=\{x_0^2-1,x_1^2-x_{1},x_0^2x_1-x_{1},x_1^2x_0-x_{1}\}$ is a reducing
family of degree $3$.

Notice that a reducing family of degree $\lb$ for a set $B$
(connected to $1$) allows to rewrite the monomials of $\<B^+\>_{\lb}$
modulo $F$ as elements of $\<B\>_{\lb}$. This leads in fact, to the
definition of the linear projection $\Prj_{F}$, associated to a reducing
family for a set $B$ connected to $1$.
\begin{defi}
Given a reducing family $F$ of degree $\lb \in \Lambda$ for a set $B$
connected to $1$, we define the linear projection $\Prj_{F}: \<B^{+}\>_{\lb}\rightarrow
\<B\>_{\lb}$ such that
$$
\begin{array}{l}
\forall m\in B_{\lb},  \Prj_{F}(m)=m,\\
\forall m\in \partial B_{\lb},  \Prj_{F}(m)=m-f;
\end{array}
$$
where $f \in F$ is the unique member of $F$ such as $m=\gamma(f)$.
We extend this construction to $\<B^+\>_{\lb}$  by $\kk$-linearity.
\end{defi}

There is a parallel with the differential algebra terminology, which we want
to highlight here. To a polynomial $f(x_{1},\ldots, x_{n}) \in R$ of degree
at most $\delta$, we can associate the differential equation
$f(\partial_{1},\ldots, \partial_{n}) \Phi(t_{1},\ldots,t_{n})=0$, where
$\partial_{i}$ is the derivation with respect to the variable $t_{i}$. It is
a linear equation in $\Phi$. The solution $\Phi(t_{1},\ldots,t_{n})$ lives
in the ring $\kk[[t_{1},\ldots,t_{n}]]$ of formal power series in $t_{1},
\ldots, t_{n}$, which we can truncate in degree 
$\lambda \succeq \Lambda(f_{i})$.
This set $\Jc_{\lb}(\tb)$ of truncated series in degree
$\lb$ is called the space of {\em $\lb$-jets} in the literature. See
eg. \cite{Saunders89} for more details.

Given a set of polynomials $F \subset \kk[\x]$ of degree at most $\lb$, we
consider the so-called {\em solution manifold} $\Rc_{\lb}$ of the
differential system $F(\partial)(\Phi)=0$ in $\Jc_{\lb}(\tb)$ (In our case,
it is just a linear space). Given $\lb'\prec \lb$ and $\Phi \in
\Jc_{\lb}(\tb)$, we can forget the coefficient of degree bigger than $\lb'$
and project it on $\Jc_{\lb'}(\tb)$.  The set of solutions of
$F(\partial)(\Phi)=0$ projected on $\Jc_{\lb'}(\tb)$, is then defined by the
equations $F_{\lb'}(\Phi')=0$ where $\<F_{\lb'}\> = \<F\> \cap
\kk[\x]_{\lb'}$. We denote it by $\Prj_{\lb'}(F)$. 
We denote by $D(F)$ the new system which extends $F$ with the
$\partial_{i}f(\partial)=0$, for $i\in1\ldots n$, $f\in F$, obtained by
formal multiplication by the $\partial_{i}$. It is called the prolongation of $F$.

The system $F$ of degree at most $\lb$ is said {\em formally integrable} if
for any $r\ge 0$, $\Prj_{\lb}(D^{r+1}(F)) = F$. A technical condition of
involutivity is introduced to ensure the regularity of the differential
system \cite{Pommaret94}.  A result of Cartan-Kuranishi
\cite{Cartan45,Kuranishi57,Pommaret94} asserts that any system $F$ can be
transformed by prolongation and projection into a (involutive) formally
integrable system. The connection of this construction with the notion of
Mumford regularity of polynomial systems has been detailed in
\cite{Malgrange03}. This correspondence has been used explicitly for solving
polynomial equations in \cite{ReZhi:icpss:04}.  This involutive division is
also used to construct a family of normal form projection related to
so-called involutive bases (see eg.
\cite{GBBIB1162}). In Janet basis construction, however one difference
is that the variables do not play a symmetric role. So-called
multiplicative variables are used to perform reduction and
non-multiplicative to extend the polynomial vector space.

In the following, instead of working in the vector space of all polynomials
of a given degree, we will consider
completion procedures based on prolongations and projections relative to a
specific set of monomials $B$. Dealing locally with this set of monomials related
to the number of solutions of the system will allow us to improve
significantly the linear algebra stages in this type of methods.

In the sequel, we will make a heavy use of multiplication
operators by one variable that we define as follows:
\negskip\begin{eqnarray*}
M_{i,\lb}: \<B\>_{\lb^{-}} & \rightarrow & \<B\>_{\lb}\\
 b & \mapsto & \Prj_{F}(x_i b ).
\end{eqnarray*}
 The subscript $_{\lb}$ is redundant as soon as we know that $F$ is a
 reducing family of degree $\lb$, and we will omit this subscript in the sequel.
\begin{defi}
 Let $F=\{f_1,\dots, f_s\}$ be a polynomial set, we denote by
 $F_{\<\lb\>}$ the vector space: $$F_{\<\lb\>}=\<\{x^\alpha f_i|
\ \Lb(x^\alpha f_i) \le \lb\}\>.$$
\end{defi}
Obviously, we have $F_{\<\lb\>} \subset (F)_{\lb}$ where $(F)$ is the ideal
generated by $F$. 
Next we introduce a definition, which is weakening the
notion of monomial ordering for Gr\"obner basis:

\begin{defi}\label{def4.10}
We say that a function $\gamma: \kx \rightarrow \Mon$ ($\Mon$ is the
set of all monomials in the variables $x_1,\dots,x_n$), is a choice
function refining the grading $\Lambda$, if for any polynomial $p$,
$\gamma(p)$ is a monomial such that
$\gamma(p)\in \supp(p)$,
if $m\in \supp(p)$, $m\not=\gamma(p)$ then $\gamma(p)$ does not
divide $m$, 
and $\Lambda(\gamma(p))=\max\{\Lambda(m),\ m\in \supp(p)\}$. 
The coefficient of the monomial $\gamma(p)$ in $p$ will be denoted by $\kappa(p)$.
\end{defi}
\begin{example}\label{macfun}
In the following, we consider a {\em Macaulay}\footnote{It gives
monomial basis similar to those given by Macaulay in his multivariate resultant
construction \cite{Mac02}.} 
choice function $\gamma$, such that for all $p\in \kx$, $\gamma(p) =
x_{1}^{\alpha_{1}}\cdots$ $x_{n}^{\alpha_{n}}$ satisfies,
$|\gamma(p)|= \max \{ |m|; m \in \supp(p)\}=d$,
and $\exists i_{0}$ st. $\alpha_{i_{0}}= \max \{ \deg_{x_{i}}(m), m
\in \supp(p)$ and $, |m|= d; i=1,\ldots,n\}$. In case more than one
monomial satisfy these conditions, the greatest monomial for the lexicographic
order is chosen.
\end{example}

  The monomial returned by the choice function has
  the same name as the {\it leading} monomial of an element of a
  reducing family. This is intended, as we will define a reducing
  family on the behalf of choice functions, and in this framework the
  two will coincide.
Hereafter if $S=\{p_1,\ldots,p_s\}$ is a polynomial set, then we denote
by $\gamma(S)$ the set: $\gamma(S)=\{\gamma(p_1)\dots \gamma(p_s)\}.$
\begin{defi}\label{def:cpol}
Let $\gamma$ be a choice function refining a grading $\Lambda$.
 For any polynomials $p_1,p_2 \in \kx$, let the C-polynomial relative
to $\gamma$ and $(p_1,p_2)$ be
$$
C(p_{1},p_{2})=\frac{\lcm(\gamma(p_1),\gamma(p_2))}{\kappa(p_{1})\,
\gamma(p_1)}p_1 -\frac{\lcm(\gamma(p_1),\gamma(p_2))}{\kappa(p_{2})\,\gamma(p_2)}p_2.
$$ 
Let the C-degree of $(p_{1},p_{2})$ be ${\Lambda}(\lcm(\gamma(p_1)$,
$\gamma(p_2)))$
and let the  leading monomial of the pair $(p_{1},p_{2})$ be $\lcm(\gamma(p_1),\gamma(p_2))$.
\end{defi}
This is almost the same definition as a $S$-polynomial \cite{CLO92} when
$\gamma$ is a monomial ordering. We however use a new name to underline that
now $\gamma$ may not be a monomial ordering (i.e. a total order 
compatible with monomial multiplication). As we will see in the next section,
the $C$-polynomials express commutation conditions for the
$M_{i,\lb}$.

\section{Generalized normal form  criterion}\label{sec:normform}
Let $F=\{f_1,\ldots,f_s\}$ be a polynomial system and let $I$ be the ideal
generated by $F$. Remember that $F_{\<\lb\>}$ (the
$\kk$-vector space spanned by the monomial multiples of the $f_i$, $x^\alpha
f_i$ of degree $\preceq \lb \in \Lb$) is included in $I_{\lb}$. Thus,  when $I_{\lb} =F_{\<\lb\>}$
we can define a normal form modulo $I$, up to the degree $\lb$ as the
projection of $\kx_{\lb}$ along $F_{\<\lb\>}$ onto a supplementary space
$\<B\>_{\lb}$. Hereafter, we consider a set $B$ of monomials, containing $1$.

Let $F$ be a \rewriting family, and let $\Hc=\{m,\ \exists p\in
F,\gamma(p)=m\}$  be the set of their leading
monomials then, obviously $F$
allows us to define the projection $\Prj_F$ of $B\cup\Hc$ on $B$ along
$\<F\>$. However we may extend this projection using the following extension
process:
\begin{defi}
  Let $F$ be a \rewriting family. For all $m\in B$, we define $\Prje_F(m)=m$. 
For $m \not\in B$, there exists $m'\in \partial B$ and $r$ integers
  $i_1,\ldots,i_r\in [1,n]$, such that $m=x_{i_1}\cdots x_{i_r}m'$.  We define
  $\Prje_F(m)$ by induction on $k$, as follows.\\
  $\bullet$ if $r=0$, $\Prje_F(m')$ is defined as $\Prje_{F}(m')=\Prj_F(m')=m'-f$
    where $f\in F$ is such that $\gamma(f)=m'$.\\
$\bullet$ $\forall k, 1\leq k\leq r$, $\Prje_F(x_{i_{r-k}}\cdots
    x_{i_r}m')=\Prj_F(x_{i_{r-k}}\Prje_F(x_{i_{r-k+1}}\cdots x_{i_{r}}m'))$, if
    this latter quantity is defined. Otherwise we say that $\Prje_{F}(m)$
    is undefined.
\end{defi} 
Remark that the above process allows us to define $\Prje_{F}$ only on 
monomials, and we extend it implicitly, by linearity.
 Remark also that this
extension process  is not defined in a unique way. Indeed, two different
decompositions of a monomial $m$ may lead to two different values of 
$\Prje_{F}(m)$. However the following theorem shows that this extension process
becomes canonical as soon as we check some commutativity conditions.
\begin{theorem}\label{thm52}
  Assume that $B$ is connected to $1$.  Let $F$ be a \rewriting
  family, and let $E$ be the set of monomials $m$ such that for all
  decomposition of $m$ as a product of variables, $m=x_{i_0}\cdots
  x_{i_k}$, $\Prj_F(x_{i_0}\Prj_F(x_{i_{1}} \cdots \Prj_F(x_{i_k})\cdots))$ is defined.
  Suppose that for all $m\in E$ and all indexes $i,j\in[1,n]$ such that
  $x_ix_jm\in E$, we have:
  $$ \Prje_{F}(x_i\Prje_{F}(x_jm))=\Prje_{F}(x_j\Prje_{F}(x_im)). $$
Then $\Prje_{F}$ coincides with the linear projection $\Prj_{S}$ of $\<E\>$ on $\<B\>$
along the vector space spanned by the polynomials $S=\{x^\alpha f,\ 
  \alpha\in \nn^n,\ f\in F \mbox{ and } x^\alpha\gamma(f)\in E\}$.
\end{theorem}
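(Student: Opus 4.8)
The plan is to show two things: first that $\Prje_F$ is well-defined on all of $E$ (independent of the chosen decomposition of a monomial as a product of variables), and second that this well-defined map equals the linear projection $\Prj_S$. The commutation hypothesis is exactly what powers the first claim, and the second is then essentially bookkeeping about what $S$ contributes.

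For the well-definedness, I would argue by induction on the $B$-index $\delta_B(m)$ (equivalently on $|m|$ since $B$ is connected to $1$). The base case is $m \in B$, where $\Prje_F(m)=m$ unambiguously, or $m \in \partial B$, where there is a unique $f \in F$ with $\gamma(f)=m$ and so $\Prje_F(m)=m-f$ is forced. For the inductive step, take $m \in E$ with $|m|=k+1 \geq 2$ and two decompositions $m = x_i m' = x_j m''$ with $m', m'' \in E$ (they lie in $E$ because any sub-decomposition of a decomposition witnessing $m \in E$ still terminates). Writing $m = x_i x_j \tilde m$ with $\tilde m \in E$, the goal is to show $\Prj_F(x_i \Prje_F(m')) = \Prj_F(x_j \Prje_F(m''))$. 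Here I would reduce $m'$ and $m''$ themselves, via the induction hypothesis applied to their various decompositions, to the "balanced" forms involving $\tilde m$: namely $\Prje_F(m') = \Prje_F(x_j \Prje_F(\tilde m))$ and $\Prje_F(m'') = \Prje_F(x_i \Prje_F(\tilde m))$, using that $|m'| = |m''| = k < k+1$ so these are already known to be decomposition-independent. Then the two sides become $\Prj_F(x_i \Prje_F(x_j \Prje_F(\tilde m)))$ and $\Prj_F(x_j \Prje_F(x_i \Prje_F(\tilde m)))$, i.e. $\Prje_F(x_i \Prje_F(x_j \tilde m))$ and $\Prje_F(x_j \Prje_F(x_i \tilde m))$ after one more unfolding, and these are equal precisely by the commutation hypothesis applied to $\tilde m \in E$. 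The subtlety I expect to manage carefully is that to invoke the induction hypothesis on $m'$ I need $m' \in E$ and that the hypothesis gives independence among all its decompositions, including ones that do not obviously "factor through" $x_j \tilde m$ — so I would want to establish the lemma that every monomial in $E$ of size $\leq k$ has all its decomposition-values agree, before closing the size-$(k+1)$ step. This interleaving (prove size-$\leq k$ independence, then use it for size $k+1$) is the real content.

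Once $\Prje_F$ is a well-defined linear map $\<E\> \to \<B\>$ fixing $\<B\>$, I would identify it with $\Prj_S$ by checking they agree on a spanning set and have the same kernel-complement structure. Concretely, $\Prj_S$ is the projection of $\<E\>$ onto $\<B\>$ along $\<S\>_{E} := \<S\> \cap \<E\>$, so it suffices to show (i) $\Prje_F$ is the identity on $B \cap E = B$, which holds by definition; (ii) $\Prje_F$ kills every generator $x^\alpha f$ of $S$ with $x^\alpha \gamma(f) \in E$, i.e. $\Prje_F(x^\alpha f) = 0$; and (iii) $\<E\> = \<B\> \oplus \<S\>_E$, or at least that $\ker \Prje_F \subseteq \<S\>_E$ so the two projections onto the same $\<B\>$ along comparable complements coincide. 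Point (ii) is where one uses that $\supp(f) \subset B^+$ with the single leading monomial $\gamma(f) \in \partial B$: write $x^\alpha = x_{i_1}\cdots x_{i_r}$, peel off variables one at a time, and at each stage $\Prje_F(x_{i_t} \cdot (\text{current element}))$ is computed monomial-by-monomial; the leading term collapses via the rewriting rule $\gamma(f) \mapsto \gamma(f) - f$ and the point is that $x^\alpha f = x^\alpha \gamma(f)\cdot\kappa(f) - (\text{lower } B\text{-index stuff in } \<B^+\>)$ so $\Prje_F(x^\alpha f)$ telescopes to $0$; making this rigorous again uses the well-definedness just proved, so the decomposition chosen for $x^\alpha \gamma(f)$ doesn't matter.

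For point (iii), the inclusion $\<B\> + \<S\>_E \subseteq \<E\>$ is clear and the directness $\<B\> \cap \<S\>_E = 0$ follows since elements of $\<S\>$ are in the ideal $(F)$ while no nonzero element of $\<B\>$ can be (this would contradict $B$ being part of a basis of $\Ac$ — or more elementarily, it follows once we know $\Prje_F$ vanishes on $\<S\>_E$ and is the identity on $\<B\>$). The spanning claim $\<E\> \subseteq \<B\> + \<S\>_E$ is proved by induction on $B$-index: any $m \in E$ of positive index has a decomposition $m = x_i m'$ with $m' \in E$ of smaller index, and $m - x_i \cdot (\text{preimage of } m' \text{ mod } S) \in S$-span plus terms handled by $\Prje_F$; more directly, $m - \Prje_F(m) \in \ker \Prje_F$ and an induction shows $\ker \Prje_F \subseteq \<S\>_E$ by tracking, at each peeling step, that the correction term $x_{i_t}\Prje_F(\cdots) - x_{i_t}(\cdots)$ before projection is an $F$-multiple of the required shape. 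The main obstacle, as flagged, is the careful induction making $\Prje_F$ well-defined; everything after that is organizing the ideal-membership bookkeeping so that the chosen decompositions are irrelevant. I would assume throughout that $B$ is connected to $1$ (so every monomial has a variable-product decomposition starting inside $B$) and that $\Lb$ being a reducing grading guarantees the inductions on $B$-index / size terminate.
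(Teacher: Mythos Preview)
Your proposal is correct and, for the well-definedness half, matches the paper's argument exactly: the paper also notes that $E$ is closed under monomial division, passes to the common factor $\tilde m$ (called $m'''$ there), and applies the commutation hypothesis, leaving the induction on $|m|$ implicit where you spell it out.

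For the identification with $\Prj_S$, the paper is more direct than your three-part plan. It skips your (i) and (ii) and proves straightaway, by induction on $|m|$, that $m-\Prje_F(m)\in\<S\>$ via the single decomposition
\[
m-\Prje_F(m)=x_{i_0}\bigl(m'-\Prje_F(m')\bigr)+\bigl(x_{i_0}\Prje_F(m')-\Prj_F(x_{i_0}\Prje_F(m'))\bigr),
\]
where the first summand lies in $x_{i_0}\cdot\<S\>\subset\<S\>$ by induction and the second lies in $\<F\>\subset\<S\>$ because $\Prje_F(m')\in\<B\>$. This is precisely the computation you sketch inside your point (iii), so the paper's route is your (iii) alone. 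Your separate verification (ii) that $\Prje_F(x^\alpha f)=0$ is not carried out in the paper; it is extra work, though arguably it closes a point the paper leaves implicit (that $\<B\>\cap\<S\>=0$, so the projection along $\<S\>$ is uniquely determined). One small slip: $\delta_B(m)$ and $|m|$ are not interchangeable in general; the induction should be on $|m|$, as the paper does.
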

\begin{proof}
Remark that the way we define it, makes $\Prje_F$ inherently a {\em linear}
multivalued map. Hence to prove the theorem we have first to show that
under the above hypotheses, $\Prje_{F}$ becomes a well defined map, and
next that this well defined linear map coincide with the projection
$\Prj_{S}$ of $\<E\>$ on $\<B\>$ along $\<S\>$.

Remark also that $E$ is obviously stable by monomial division: if all the
possible decompositions of $m$ as a product of variables $m=x_{i_0}\cdots
x_{i_k}$ are such that $\Prj_F(x_{i_0}\cdots$ $\Prj_F(x_{i_k}))$ is defined,
then a fortiori if $m'$ is a divisor of $m$, this property is true for $m'$.
Let us show that the extension process is independent of the way $m$ is
decomposed as a product of variables. Let $m=x_{i_{0}}\, m'
=x_{i_{1}}\, m''$ with $i_{0}\neq i_{1}$ and $m, m', m'' \in E$, then there
exists $m''' \in E$ (since $E$ is stable by monomial division) such that $m=
x_{i_{0}}\, x_{i_{1}} m'''$. As $m,\ m',\ m'',$ and $m'''$ are in $E$,
$\Prj_{F}^e(m')$, $\Prj_{F}^e(m'')$, $\Prj_{F}^e(x_{i_0}m')$,
$\Prj_{F}^e(x_{i_1}m''),$ and $\Prj_{F}^e(m''')$ are defined and we have: $$
\begin{array}{rcl}
\Prje_{F} (x_{i_{0}}\, \Prje_{F}(m')) 
 & = & \Prje_{F}(x_{i_{0}}\, \Prje_{F}(x_{i_{1}} \Prje_{F}(m''') )), \\
\Prje_{F} (x_{i_{1}}\, \Prje_{F}(m'')) 
 & = & \Prje_{F}(x_{i_{1}}\, \Prje_{F}(x_{i_{0}} \Prje_{F}(m'''))).
\end{array}
$$
The commutation condition guarantees that the two quantities are
equal, so that the definition of $\Prje_{F}$ does not depend on the way
to write $m$ as a product of variables.\\
Next we have to show that $\Prje_{F}$ and $\Prj_{S}$ coincide on their common
set of definition. We do it by induction on the size of the monomials:\\
It is true that $\Prje_{F}(1)=\Prj_{S}(1)=1$ (since $1 \in B$). For any monomial
$m\not=1$ in $E$, the property of connectivity of $B$ and the definition
of $E$ gives us: $\exists m'\in E\mbox{ and } i_0\in [1,n] \mbox{ such
  that } m=x_{i_0}m'$ and $\Prje_{F}(m')$ is defined, so that we have:
\negskip\begin{eqnarray*}
\Prje_{F}(m)&=&\Prje_{F}(x_{i_0}m')=^{def}\Prje_F(x_{i_0}\Prje_{F}(m'))
 = ^{induction}\Prje_{F}(x_{i_0} \Prj_{S}(m')) \in \<B\>.
\end{eqnarray*}
Now by induction, $m'- \Prj_{S}(m') \in S_{\lb^{-}}$ where $\lb=\Lambda(m)$ and 
\negskip\begin{eqnarray*}
{m-\Prje_{F}(m)= x_{i_{0}}(m'- \Prj_{S}(m'))}
 +  \left( x_{i_{0}}\Prj_{S}(m')-\Prje_{F}(x_{i_0}\Prj_{S}(m')) \right) \in \<S\>. 
\end{eqnarray*} 
Thus $\Prje_{F}(m)$ is the projection of $m$ on $\<B\>$ along $\<S\>$.
\end{proof}

Suppose now that we are given a reducing family of degree $\delta$
instead of a \rewriting family. Then we can further extend the
above theorem with the help of the following lemma.
\begin{lemme}\label{lemme:normformdef}
  Let $F$ be a reducing family of degree $\lb$ for a set $B$ connected
  to $1$, and suppose that $\forall f\in F,\ \Lb(\gamma(f))=\Lb(f)$.
  With the notation of Theorem \ref{thm52}, the set $E$ contains the set of
monomials of degree less than or equal to $\lb$.
\end{lemme}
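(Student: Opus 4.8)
The plan is to show, by induction on the size $|m|$ of a monomial $m$ with $\Lb(m) \preceq \lb$, that \emph{every} decomposition $m = x_{i_0}\cdots x_{i_k}$ yields a well-defined nested projection $\Prj_F(x_{i_0}\Prj_F(x_{i_1}\cdots \Prj_F(x_{i_k})\cdots))$, i.e.\ that $m \in E$. The base cases $|m|=0$ (so $m=1 \in B$) and the case $m \in B$ are immediate since $\Prj_F$ restricts to the identity on $B$, and the empty decomposition gives $m$ itself. For the inductive step, fix a decomposition $m = x_{i_0} m'$ with $m' = x_{i_1}\cdots x_{i_k}$; by the induction hypothesis applied to $m'$ (which has smaller size and, since $\Lb$ is a reducing grading, satisfies $\Lb(m') \prec \Lb(m) \preceq \lb$), the inner quantity $b := \Prj_F(x_{i_1}\cdots\Prj_F(x_{i_k})\cdots)$ is defined and lies in $\<B\>_{\lb^-}$. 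It then remains to check that $\Prj_F(x_{i_0} b)$ is defined, which amounts to checking that $\Prj_F$ is defined on each monomial appearing in the support of $x_{i_0} b$.

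The key point here is the interplay between the two degree bounds. Since $b \in \<B\>_{\lb^-}$, each monomial $\mu$ in its support is in $B$ with $\Lb(\mu) \preceq \lb^-$, hence $x_{i_0}\mu \in B^+$ with $\Lb(x_{i_0}\mu) \preceq \lb$ by definition of $\lb^-$. If $x_{i_0}\mu \in B$ we are done trivially; if $x_{i_0}\mu \in \partial B$, then since $F$ is a \emph{reducing} family of degree $\lb$ and $x_{i_0}\mu$ is a monomial of $\partial B$ of degree at most $\lb$, Definition \ref{defnormfam} guarantees there is $f \in F$ with $\gamma(f) = x_{i_0}\mu$, so $\Prj_F(x_{i_0}\mu) = x_{i_0}\mu - f$ is defined. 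Hence $\Prj_F(x_{i_0} b)$ is defined, completing the induction; this shows $m \in E$. The hypothesis $\Lb(\gamma(f)) = \Lb(f)$ for all $f \in F$ is what ensures that reducing a monomial of degree $\preceq \lb$ does not introduce terms of strictly larger degree — so that the inductive bookkeeping on degrees stays consistent — and I would invoke it precisely at the point where I claim that the result $x_{i_0}\mu - f$ of one reduction step still lies in $\kx_{\lb}$.

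The main obstacle I anticipate is not any single hard estimate but the careful management of the quantifier structure: $E$ is defined by a condition over \emph{all} decompositions of $m$, so the induction must be set up so that the hypothesis delivers "all decompositions of every strict divisor work" rather than just "some decomposition works." The reducing-grading property is what makes this go through cleanly: it forces $\Lb$ to strictly decrease under strict division, so the induction on size is compatible with the induction on degree, and at no stage does a reduction step push us past degree $\lb$ where the reducing family $F$ would no longer provide a rewriting rule. Once $E$ is known to contain all monomials of degree $\preceq \lb$, one is in position to apply Theorem \ref{thm52} on that set.
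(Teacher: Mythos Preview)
Your approach is the same as the paper's --- induction along the length of the decomposition to show each nested application of $\Prj_F$ is defined --- but your induction hypothesis is too weak and one step does not go through as written.

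You induct on $|m|$ to prove only $m \in E$. From this hypothesis applied to $m'$ you then assert $b \in \<B\>_{\lb^-}$, but the bare statement $m' \in E$ carries no degree information about $b$. Worse, even the sharper bound $\Lambda(b) \preceq \Lambda(m')$ (which is what one can actually prove) does not in general give $\Lambda(b) \preceq \lb^-$: from $\Lambda(m') \prec \lb$ one cannot conclude $\Lambda(m') \preceq \lb^-$ for an arbitrary grading (for a monomial-order grading, $\lb^-$ may sit far below the immediate predecessor of $\lb$). So your deduction ``$\Lambda(x_{i_0}\mu) \preceq \lb$ by definition of $\lb^-$'' is not justified.

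The fix, which is precisely what the paper does, is to strengthen the induction hypothesis: the partial projection $p_k = \Prj_F(x_{i_k}\cdots\Prj_F(x_{i_1})\cdots)$ is defined \emph{and} $\Lambda(p_k) \preceq \Lambda(x_{i_k}\cdots x_{i_1})$. Then for $\mu \in \supp(b)$ one gets $\Lambda(x_{i_0}\mu) \preceq \Lambda(x_{i_0}) + \Lambda(m') = \Lambda(m) \preceq \lb$ directly from the compatibility of $\prec$ with addition in $\Lambda$, with no recourse to $\lb^-$. The hypothesis $\Lambda(\gamma(f)) = \Lambda(f)$ is then what closes the degree bound in the inductive step --- exactly the role you identify in your last paragraph, but it must be built into the induction statement rather than invoked only once at the end.
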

\begin{proof}
Let $m\in\mc_\lb$ be a monomial of degree less or equal to $\lb$, then $m$
can be written as $m=x_{i_1}\ldots x_{i_d}$ with $d=|m|$.  
Let us prove by induction on $k\leq d$, that 
$p_k=\Prj_F(x_{i_k}\Prj_F(\cdots \Prj_F(x_{i_1}))\cdots)$ is defined and that
${\Lb}(p_k)\le {\Lb}(x_{i_k}\cdots x_{i_1})$.\\  
Consider now $x_{i_{k+1}}p_{k}\in B^{+}$. For $m'\in
\supp(x_{i_{k+1}}p_{k})\cap B$, we have $\Prj(m')=m'$ and 
by induction hypothesis $\Lambda(m') \le \Lambda(x_{i_{k+1}}\cdots
x_{i_1})$. 
For $m' \in \supp(x_{i_{k+1}}p_k) \cap\partial B$, as $F$ is a
reducing family of degree $\lb$, we have a rewriting rule for $m'$. The
hypothesis that $\forall f\in F,\ \Lb(\gamma(f))=\Lb(f)$ implies that $m'$
rewrites in terms of monomials of degree bounded by
$\Lambda(x_{i_{k+1}}\cdots x_{i_1})$. This proves $p_{k+1}:=
\Prj_{F}(x_{i_{k+1}} p_{k})$ is defined and ${\Lb}(p_{k+1}) \le
{\Lb}(x_{i_{k+1}}\cdots x_{i_1})$.\\ 
This proves by induction that 
$\Prje_{F}(x_{i_{1}}\cdots \Prje_{F}(x_{i_d}))$ is defined, for any
decomposition $m =x_{i_{1}}\cdots x_{i_{d}} \in \Mon_{\lb}$ so that $m\in E$.
This ends the proof.
\end{proof}
\begin{theorem}\label{thm1}
Let $F$ be a reducing family of degree $\lb$ for a set $B$
connected to $1$.  If we have: 
\begin{itemize}
 \item $\forall f\in F,\ \Lb(\gamma(f))=\Lb(f)$.   
 \item
    $M_{j,\lb}\circ M_{i,\lb^{-}}=M_{i,\lb} \circ M_{j,\lb^{-}}$, for $1\leq
    i,j\leq n$,
\end{itemize} 
then, we can extend $\Prj_{F}$ to a linear projection  $\Prje_{F}$ from
  $\kx_{\lb}$ onto $\<B\>_{\lb}$ with kernel  $F_{\<\lb\>}$. 
\end{theorem}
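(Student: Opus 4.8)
The statement is the natural "commutative = well-defined" upgrade of Theorem~\ref{thm52} from a rewriting family to a reducing family of degree $\lb$, so the plan is to reduce it to Theorem~\ref{thm52} by checking its hypotheses, and then to identify the vector space $\<S\>$ there with $F_{\<\lb\>}$ here. First I would invoke Lemma~\ref{lemme:normformdef}: since $F$ is a reducing family of degree $\lb$ and $\forall f\in F$, $\Lb(\gamma(f))=\Lb(f)$, the set $E$ of Theorem~\ref{thm52} contains all monomials of degree $\preceq \lb$, i.e. $\Mon_\lb \subseteq E$. In particular $\Prje_F$ is defined on every monomial of $\kx_\lb$ and, by the degree bound in that lemma, $\Prje_F(\Mon_\lb)\subseteq \<B\>_\lb$. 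So $\Prje_F$ restricts to a linear map $\kx_\lb \to \<B\>_\lb$.

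**Checking commutativity on $E$.** To apply Theorem~\ref{thm52} I must verify the commutation condition $\Prje_F(x_i\Prje_F(x_j m))=\Prje_F(x_j\Prje_F(x_i m))$ for all $m\in E$ with $x_ix_jm\in E$. Here is where the hypothesis $M_{j,\lb}\circ M_{i,\lb^-}=M_{i,\lb}\circ M_{j,\lb^-}$ enters. I would argue by induction on $|m|$: for $m\in B_{\lb^{--}}$ (so that all intermediate quantities stay inside the degree range where the $M$-operators are defined), $\Prje_F(x_j m)=\Prj_F(x_j m)=M_{j}(m)$ by definition of the multiplication operator, and then $\Prje_F(x_i\Prje_F(x_j m)) = M_i(M_j(m))$, which equals $M_j(M_i(m))$ by hypothesis; by linearity this extends from $m\in B$ to all of $\<B\>_{\lb^{--}}$, hence to every monomial $m$ with $x_ix_jm$ of degree $\preceq\lb$ once one first pushes $m$ down to $\Prje_F(m)\in\<B\>$ using the inductive hypothesis for smaller size. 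The main obstacle is precisely the bookkeeping here: one must be careful that $\Prje_F(m)$ lies in a piece of $\<B\>$ small enough ($\<B\>_{\lb^{--}}$, loosely) that applying $M_i$ then $M_j$ — or vice versa — never leaves the degree $\lb$ window, and that the degree estimate from Lemma~\ref{lemme:normformdef} guarantees this. Monomials $m$ with $x_ix_jm$ of degree strictly larger than $\lb$ simply are not in $E$ by the degree bound, so there is nothing to check for them, which is what makes the window argument close.

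**Identifying the kernel.** With the commutativity condition verified, Theorem~\ref{thm52} gives that $\Prje_F$ coincides with the linear projection $\Prj_S$ of $\<E\>$ onto $\<B\>$ along $\<S\>$, where $S=\{x^\alpha f : f\in F,\ x^\alpha\gamma(f)\in E\}$. Restricting to $\kx_\lb\subseteq\<E\>$, it remains to show that the kernel of $\Prje_F$ on $\kx_\lb$ is exactly $F_{\<\lb\>}=\<\{x^\alpha f_i : \Lb(x^\alpha f_i)\preceq\lb\}\>$. One inclusion is immediate from $\supp(f_i)\subseteq B^+$ and $\Prje_F$ being a projection vanishing on each $f\in F$: if $\Lb(x^\alpha f)\preceq\lb$ then $x^\alpha f\in\kx_\lb$ and $\Prje_F(x^\alpha f)=0$, using linearity and the extension process; so $F_{\<\lb\>}\subseteq\ker\Prje_F$. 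For the reverse, note that $\ker(\Prj_S)\cap\kx_\lb$ is spanned by $\{x^\alpha f : x^\alpha\gamma(f)\in E,\ x^\alpha f\in\kx_\lb\}$; the condition $\forall f\in F,\ \Lb(\gamma(f))=\Lb(f)$ together with the reducing-grading property forces $\Lb(x^\alpha f)=\Lb(x^\alpha\gamma(f))$, and since $x^\alpha\gamma(f)\in E$ implies (via Lemma~\ref{lemme:normformdef} and the structure of $E$) that $\Lb(x^\alpha\gamma(f))\preceq\lb$, each such generator already lies in $F_{\<\lb\>}$. Hence $\ker\Prje_F|_{\kx_\lb}=F_{\<\lb\>}$ and $\Prje_F:\kx_\lb\to\<B\>_\lb$ is the asserted projection with kernel $F_{\<\lb\>}$, which completes the proof.
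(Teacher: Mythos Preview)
Your approach is the paper's: invoke Lemma~\ref{lemme:normformdef} to get $\Mon_\lb\subseteq E$, verify the commutation hypothesis of Theorem~\ref{thm52} via the assumed identity $M_{j,\lb}\circ M_{i,\lb^-}=M_{i,\lb}\circ M_{j,\lb^-}$ (applied to $\Prje_F(m)\in\<B\>_{\lb^{--}}$), and then apply that theorem; the paper's own proof consists of exactly these three steps and does not spell out the kernel identification at all.

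One correction in your added kernel paragraph: Lemma~\ref{lemme:normformdef} only gives the inclusion $\Mon_\lb\subseteq E$, not the reverse, so ``$x^\alpha\gamma(f)\in E$'' does \emph{not} imply $\Lb(x^\alpha\gamma(f))\preceq\lb$. You do not need that implication anyway, since the condition $x^\alpha f\in\kx_\lb$ you already imposed is precisely $\Lb(x^\alpha f)\preceq\lb$, i.e.\ membership in $F_{\<\lb\>}$. The step that genuinely needs justification is your claim that $\<S\>\cap\kx_\lb$ is spanned by those particular generators (span and intersection do not commute in general); the clean fix is a dimension argument: the recursion $m-\Prje_F(m)=x_{i_0}(m'-\Prje_F(m'))+\bigl(x_{i_0}\Prje_F(m')-\Prj_F(x_{i_0}\Prje_F(m'))\bigr)$, together with the degree bound of Lemma~\ref{lemme:normformdef}, shows inductively that $m-\Prje_F(m)\in F_{\<\Lb(m)\>}$, hence $\kx_\lb=\<B\>_\lb+F_{\<\lb\>}$, and since $F_{\<\lb\>}\subseteq\ker\Prje_F$ with $\<B\>_\lb\cap\ker\Prje_F=0$, equality of the kernel with $F_{\<\lb\>}$ follows.
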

\begin{proof}
  As $F$ is a reducing family of degree $\lb$, by Lemma
  \ref{lemme:normformdef}, we have $E\supset \Mon_{\lb}$.
  
Let us prove that for all $m \in \Mon_{\lb^{--}}$ and all pairs of indices
$(i,j)$, there exists a way to define $\Prje_{F}$ such that
$\Prje_{F}(x_i$ $\,\Prje_{F}(x_j m))=\Prje_{F}(x_j \, \Prje_{F}(x_im)).$ 

{}As $\Mon_{\lb^{--}} \subset \Mon_{\lb} \subset E$, $\Prje_{F}(m)$ is
defined and $\supp(\Prje_{F}(m))\subset B$. 
We define $\Prje_{F}(x_im)=\Prj_F(x_{i}\Prje_{F}(m))$ and, similarly,
$\Prje_{F}(x_jm)=\Prj_F(x_j\Prje_{F}(m))$. With this definition we have: 
\negskip\begin{eqnarray*}
\Prje_{F}(x_{i}\Prje_{F}(x_jm)) &=& M_{i,\lb}(M_{j,\lb^{-}}(\Prje_{F}(m))) \\
& =&M_{j,\lb}(M_{i,\lb^{-}}(\Prje_{F}(m)))=\Prje_{F}(x_j \Prje_{F}(x_im)). 
\end{eqnarray*}
which proves the commutation property.     
We end the proof by applying Theorem \ref{thm52}.
\end{proof}
\begin{corol}\label{corsum}
With the hypothesis of Theorem \ref{thm1}, we have 
$\kx_{\lb}=\<B\>_{\lb}\oplus F_{\<\lb\>}$. 
\end{corol}
{}Let us give here another effective way to check that we
have a projection from $F_{\<\lb\>}$ (vector space spanned by the monomial
multiples of the $f_i$ of degree $\lb$) onto $\<B\>_\lb$ (element of degree
$\lb$ of the vector space spanned by $B$) starting from a reducing family of
degree $\lb$, without computing {\it explicitly} the multiplication
operators.
\begin{theorem}\label{thm54}
Let $\lb \in \Lb$.  Let $F$ be a reducing family of degree $\lb \in \Lb$,
for $B$. Assume that $\forall\ f\in F,\ \Lb(\gamma(f))=\Lb(f)$ and let
$\Prj_{F}$ be the induced projection from $\<B^{+}\>_{\lb}$ onto
$\<B\>_{\lb}$. Then
$\forall f, f' \in F_{\<\lb\>}$ such that $C(f,f') \in \<B^{+}\>_{\lb}$,
$$
\Prj_{F}(C(f,f')) = 0
$$ 
iff  $\Prj_{F}$  extends uniquely as a projection $\Prje_{F}$ 
from $\kx_{\lb}$ onto $\<B\>_{\lb}$ such that $\ker(\Prje_{F})=F_{\<\lb\>}$.
\end{theorem}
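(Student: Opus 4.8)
The plan is to reduce Theorem~\ref{thm54} to Theorem~\ref{thm1} by showing that the vanishing of $\Prj_F$ on all $C$-polynomials $C(f,f')$ that land in $\<B^+\>_\lb$ is equivalent to the commutation of the multiplication operators $M_{i,\lb}\circ M_{j,\lb^-}=M_{j,\lb}\circ M_{i,\lb^-}$ appearing in the hypothesis of Theorem~\ref{thm1}. Once this equivalence is in hand, the ``if'' direction is immediate from Theorem~\ref{thm1} (which produces the extension $\Prje_F$ with kernel $F_{\<\lb\>}$), and uniqueness follows because on a set $B$ connected to $1$ any extension of $\Prj_F$ satisfying the defining induction is forced step by step once the values on $B^+$ are fixed — so the only freedom is precisely the one killed by the commutation relations. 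The ``only if'' direction will follow by running the same equivalence backwards: if $\Prj_F$ extends to a projection onto $\<B\>_\lb$ with kernel exactly $F_{\<\lb\>}$, then for any $f,f'\in F_{\<\lb\>}$ we have $C(f,f')\in F_{\<\lb\>}=\ker\Prje_F$, and if moreover $C(f,f')\in\<B^+\>_\lb$ then $\Prje_F$ agrees with $\Prj_F$ there, giving $\Prj_F(C(f,f'))=0$.

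\textbf{Key steps.} First I would fix a monomial $m\in\Mon_{\lb^{--}}$ and a pair of indices $i,j$ and write out $x_i x_j m$ in terms of the reducing family. The crucial observation is that $\Prj_F(x_i\Prj_F(x_j m))-\Prj_F(x_j\Prj_F(x_i m))$ is, up to the already-known rewritings, exactly the image under $\Prj_F$ of a $C$-polynomial built from the two rewriting rules of $F$ whose leading monomials are $x_j m$ and $x_i m$ (or, when these are already in $B$, from a single rule at $x_ix_jm$); here one uses that $\lcm$ of the two leading monomials divides a common multiple dictated by $x_ix_jm$, which is how the $\lcm$ in Definition~\ref{def:cpol} enters. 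This identifies the commutator defect with $\Prj_F$ of an element of $F_{\<\lb\>}$ that happens to lie in $\<B^+\>_\lb$. Conversely, every $C(f,f')$ with $f,f'\in F_{\<\lb\>}$ and $C(f,f')\in\<B^+\>_\lb$ arises this way after peeling off monomial factors, using the hypothesis $\Lb(\gamma(f))=\Lb(f)$ to control degrees exactly as in the proof of Lemma~\ref{lemme:normformdef}. Then I would invoke Theorem~\ref{thm1} for the ``if'' direction and Corollary~\ref{corsum} (the direct sum $\kx_\lb=\<B\>_\lb\oplus F_{\<\lb\>}$) to get existence and uniqueness of the extension; for the ``only if'' direction I would argue as in the last paragraph above.

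\textbf{Main obstacle.} The delicate point is the bookkeeping that translates ``$\Prj_F$ kills all relevant $C$-polynomials'' into ``$M_{i,\lb}\circ M_{j,\lb^-}=M_{j,\lb}\circ M_{i,\lb^-}$'', because a priori the commutator $\Prj_F(x_i\Prj_F(x_j m))-\Prj_F(x_j\Prj_F(x_i m))$ is a combination of several rewriting steps, not a single $C$-polynomial, and one must check that all the intermediate terms either lie in $B$ (hence are rewritten identically on both sides) or are handled by induction on the size $|m|$. Making this reduction clean requires that every monomial of degree $\preceq\lb$ be reachable by rewriting — which is exactly Lemma~\ref{lemme:normformdef} — and that the degree bookkeeping never escapes $\lb$, which is where the assumption $\Lb(\gamma(f))=\Lb(f)$ is used repeatedly. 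A secondary subtlety is verifying that the $C$-polynomials actually produced in this way exhaust, up to monomial multiples and $\kk$-linear combinations, all pairs $C(f,f')$ with $f,f'\in F_{\<\lb\>}$ and image in $\<B^+\>_\lb$, so that the equivalence is genuinely an ``iff'' and not merely one implication.
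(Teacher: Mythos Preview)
Your approach is essentially the paper's: reduce to Theorem~\ref{thm1} by showing that vanishing of the $C$-polynomials forces commutation of the $M_{i,\lb}$, and handle the converse directly via $C(f,f')\in F_{\<\lb\>}=\ker\Prje_F$. Two small comments. First, the monomial $m$ on which you test commutation should be taken in $B_{\lb^{--}}$, not in $\Mon_{\lb^{--}}$, since the operators $M_{i,\lb^-}$ are only defined on $\<B\>_{\lb^-}$; the paper's computation is literally $M_{i_2,\lb}(M_{i_1,\lb^-}(m))-M_{i_1,\lb}(M_{i_2,\lb^-}(m))=\Prj_F(C(f',f))$ for $m\in B_{\lb^{--}}$ with $\gamma(f)=x_{i_1}m$, $\gamma(f')=x_{i_2}m$, plus the degenerate cases where one of $x_{i_1}m,x_{i_2}m$ already lies in $B$. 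Second, your ``secondary subtlety'' --- that every $C(f,f')$ with $f,f'\in F_{\<\lb\>}$ must arise from a commutator --- is unnecessary and you should drop it: for the forward direction you only need that the \emph{particular} $C$-polynomials coming from pairs of border rules vanish (a subset of the hypothesis), and for the converse you already have the direct argument $C(f,f')\in F_{\<\lb\>}\cap\<B^+\>_\lb\Rightarrow\Prj_F(C(f,f'))=\Prje_F(C(f,f'))=0$, which needs no exhaustion claim.
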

\begin{proof}
By Theorem \ref{thm1}, we have  to show that this condition
is equivalent to the commutation of the operators $M_{i,\lb'},\lb'< \lb$ on the
monomials of $B_{\lb^{--}}$.\\
For any $m \in B_{\lb^{--}}$ and any $i_{1}\neq i_{2}$ such that
$x_{i_{1}}\,m \in \partial B$, $x_{i_{2}}\,m \in \partial B$,
there exists $f, f' \in F_{\<\lb^{-}\>}$ such that
$\gamma(f)=x_{i_{1}}\,m$,  $\gamma(f')=x_{i_{2}}\,m$.
Thus, we have $\Prj_{F}(x_{i_{1}}\,m) = \gamma(f)-f$, $\Prj_{F}(x_{i_{2}}\,m) =
\gamma(f')-f'$ and $C(f,f') = x_{i_{2}}\, f- x_{i_{1}}\, f' \in
\<B\>^{+}_{\lb}$. Consequently, 
\begin{eqnarray*}
\lefteqn{M_{i_2,\lb}(M_{i_1,\lb^{-}}(m))-M_{i_1,\lb}(M_{i_2,\lb^{-}}(m))} \\
& = & M_{i_2,\lb}(\gamma(f)-f)-M_{i_1,\lb}(\gamma(f')-f')\\
& = &
\Prj_{F}(x_{i_2}\gamma(f)-x_{i_2}f)-\Prj_{F}(x_{i_1}\gamma(f')-x_{i_1}f')\\
& = & \Prj_{F}(x_{i_1}f'-x_{i_2}f)=\Prj_{F}(C(f',f)).
\end{eqnarray*}
which is zero by hypothesis. A similar proof applies if $x_{i_{1}}\,m
\in B$ or $x_{i_{2}}\,m \in B$.

Conversely, since $\ker(\Prje_{F})= F_{\<\lb\>}$ and $C(f,f') \in
F_{\<\lb\>} \cap \<B\>^{+}_{\lb}$, we have that
$\Prj_{F}(C(f',f))=\Prje_{F}(C(f',f))=0$, which proves the equivalence
and Theorem \ref{thm54}.
\end{proof}
\begin{remark}\label{remcom}
  In this proof, we  have shown that if the $C$-polynomials up to the
  degree $\lb$ reduce to $0$, then the multiplication operators
  $M_{i,\lb}$ commute.
\end{remark}
\begin{defi}
A reducing family $F$ for all degrees $\lb \in \Lb$ on a set $B$ of
monomials, connected to $1$ will be called a {\em border basis} for $B$.
\end{defi}
Finally, the previous results leads to a new  proof of Theorem 3.1 of
\cite{BMnf99}:
\begin{theorem}\label{thm:commute}
Let $F$ be a border basis for a set $B$ of monomials, connected to $1$, let
$\Prj_{F}$ be the corresponding reduction from $\<B^{+}\>$ onto $\<B\>$,  
and let $M_{i}: \<B\> \rightarrow \<B\>$ such that $\forall\, b\in \<B\>$,
$M_{i}(b) = \Prj_{F}(x_{i}\, b)$.
Then,
$$ 
M_{j}\circ M_{i}=M_{i}\circ M_{j},\ {\rm for}\  1\leq i,j\leq n
$$
iff there exists a unique projection $\Prje_{F}$ from $\kk[\x]$ onto $\<B\>$
such that $\ker(\Prje_{F})=(F)$ and $(\Prje_{F})_{|\<B^{+}\>}=\Prj_{F}$.
\end{theorem}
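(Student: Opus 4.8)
The statement is the "global" version of Theorem~\ref{thm1}: we want to glue together, over all degrees $\lb \in \Lb$, the degree-$\lb$ projections $\Prje_F$ produced by Theorem~\ref{thm1} into a single projection from $\kk[\x]$ onto $\<B\>$ with kernel the whole ideal $(F)$. The natural approach is to reduce the global commutation hypothesis ($M_j \circ M_i = M_i \circ M_j$ on $\<B\>$) to the family of truncated hypotheses appearing in Theorem~\ref{thm1} (namely $M_{j,\lb}\circ M_{i,\lb^-} = M_{i,\lb}\circ M_{j,\lb^-}$ for every $\lb$), apply that theorem degree by degree, and then check that the resulting maps $\Prje_F$ are compatible as $\lb$ grows, so that they assemble into a well-defined linear map on $\kk[\x] = \bigcup_\lb \kk[\x]_\lb$.

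\smallskip

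\textbf{Step 1: from global to truncated commutation.} First I would observe that since $F$ is a border basis (a reducing family for \emph{all} degrees $\lb$), the operators $M_{i,\lb}: \<B\>_{\lb^-} \to \<B\>_\lb$ are all defined, and the hypothesis $\forall f \in F,\ \Lb(\gamma(f)) = \Lb(f)$ needed in Theorem~\ref{thm1} is automatic for a border basis in this context (it is the standing assumption on reducing families here; if not, one restricts attention to families satisfying it, as in all the earlier results). The global operators $M_i$ on $\<B\>$ restrict, on elements of degree $\preceq \lb^-$, to $M_{i,\lb}$ up to the identification of $\<B\>_{\lb^-}$ inside $\<B\>$. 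Hence $M_j \circ M_i = M_i \circ M_j$ on all of $\<B\>$ implies, for each fixed $\lb$, the truncated identity $M_{j,\lb}\circ M_{i,\lb^-} = M_{i,\lb}\circ M_{j,\lb^-}$ on $\<B\>_{\lb^{--}}$. Conversely, any $b \in \<B\>$ lies in $\<B\>_{\lb}$ for some $\lb$, so the truncated identities for all $\lb$ recover the global one. Thus the two commutation conditions are equivalent.

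\smallskip

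\textbf{Step 2: apply Theorem~\ref{thm1} at each degree and check compatibility.} Given global commutation, Theorem~\ref{thm1} produces, for each $\lb \in \Lb$, a linear projection $\Prje_{F,\lb}: \kk[\x]_\lb \to \<B\>_\lb$ with kernel $F_{\<\lb\>}$, extending $\Prj_F$. I would then argue that for $\lb' \preceq \lb$, the map $\Prje_{F,\lb}$ restricted to $\kk[\x]_{\lb'}$ equals $\Prje_{F,\lb'}$: both are linear projections onto $\<B\>_{\lb'}$ with kernel $F_{\<\lb'\>}$ and both agree with $\Prj_F$ on $\<B^+\>_{\lb'}$; since $\kk[\x]_{\lb'} = \<B\>_{\lb'} \oplus F_{\<\lb'\>}$ by Corollary~\ref{corsum}, a projection is uniquely determined by its image and kernel, forcing the two to coincide. (Alternatively, one can invoke the uniqueness clause already built into Theorem~\ref{thm54}.) Because $\Lb$ is a well-ordered monoid and $\kk[\x] = \bigcup_{\lb} \kk[\x]_\lb$ with the $\kk[\x]_{\lb'} \subset \kk[\x]_\lb$ forming a directed system under these inclusions, the compatible family $(\Prje_{F,\lb})_\lb$ glues to a single linear map $\Prje_F: \kk[\x] \to \<B\>$, which is a projection (idempotent on $\<B\>$, identity there), restricts to $\Prj_F$ on $\<B^+\>$, and has kernel $\bigcup_\lb F_{\<\lb\>}$.

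\smallskip

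\textbf{Step 3: identify the kernel with the ideal $(F)$, and uniqueness.} It remains to see $\bigcup_\lb F_{\<\lb\>} = (F)$. The inclusion $\subseteq$ is immediate since $F_{\<\lb\>} \subset (F)_\lb \subset (F)$. For $\supseteq$, any element of $(F)$ is a finite sum $\sum_k g_k f_{i_k}$; expanding each $g_k$ as a $\kk$-linear combination of monomials shows this element is a $\kk$-linear combination of terms $x^\alpha f_i$, each of some degree $\preceq \lb_0$ for $\lb_0$ the max of the degrees involved, hence the element lies in $F_{\<\lb_0\>}$. So $\ker(\Prje_F) = (F)$. Finally, uniqueness of $\Prje_F$ (given that it restricts to $\Prj_F$ on $\<B^+\>$ and has kernel $(F)$) follows from $\kk[\x] = \<B\> \oplus (F)$ — which is the union over $\lb$ of the decompositions in Corollary~\ref{corsum} — since a linear map is determined by its kernel together with its values on a complement. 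For the converse direction of the "iff", existence of such a $\Prje_F$ gives $\kk[\x] = \<B\> \oplus (F)$, and then $M_i(b) = \Prje_F(x_i b)$ satisfies $M_j M_i(b) = \Prje_F(x_i x_j b) = M_i M_j(b)$ by commutativity of $\kk[\x]$ and linearity of $\Prje_F$, exactly as in the last step of the proof of Theorem~\ref{thm1}.

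\smallskip

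\textbf{Main obstacle.} The genuine content is concentrated in Theorem~\ref{thm1} and Corollary~\ref{corsum}, which we are allowed to assume; the remaining work is the colimit/gluing bookkeeping. The one point that needs care is Step~2's compatibility claim: one must make sure the degree-$\lb$ projections are \emph{literally} restrictions of one another and not merely isomorphic, which is why invoking the uniqueness of the projection with prescribed image $\<B\>_{\lb'}$ and kernel $F_{\<\lb'\>}$ (Corollary~\ref{corsum}) is the clean way to close it. A secondary subtlety is that $\Lb$ is only assumed well-ordered, not $\NN$; but since every polynomial has a well-defined degree in $\Lb$ and the filtration $(\kk[\x]_\lb)_\lb$ is exhaustive and directed, the gluing goes through without needing countability.
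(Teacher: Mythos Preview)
Your proposal is correct and follows essentially the same route as the paper: apply Theorem~\ref{thm1} at each degree $\lb$, verify that the resulting projections $\Prje_{F,\lb}$ restrict to one another (the paper argues this via the inclusions $(B_{\lb'})_{\lb}=B_{\lb}$ and $F_{\<\lb\>}\subset (F_{\<\lb'\>})_{\lb}$, you via the direct-sum decomposition of Corollary~\ref{corsum}; both work), glue to a global $\Prje_F$, and identify $\bigcup_\lb F_{\<\lb\>}=(F)$. Your Step~1 and your explicit converse argument are more detailed than the paper's (which simply invokes Theorem~\ref{thm1} and calls the converse ``immediate''), but the overall architecture is the same.
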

\begin{proof}
Under these hypotheses, by Theorem \ref{thm1},  for any $\lb \in \Lb$,
$(\Prj_{F})_{|\<B^{+}\>_{\lb}}$ extends uniquely to a projection
$\Prje_{F_{\lb}}$ from $\kx_{\lb}$ onto $\<B\>_{\lb}$, such 
that $\ker(\Prje_{F_{\lb}})=F_{\<\lb\>}$.
Since for any $\lb,\lb' \in \Lb$ such that $\lb \prec \lb'$, 
we have $(B_{\lb'})_{\lb}
= B_{\lb}$, and $F_{\<\lb\>} \subset (F_{\<\lb'\>})_{\lb}$,
we also have
$(\Prje_{F_{\lb'}})_{|\kx_{\lb}} = \Prje_{F_{\lb}}$. 
This defines a unique linear operator $\Prje_{F}$ on $\kx$ such that 
$\Prje_{|\kx_{\lb}} =\Prje_{F_{\lb}}$ and 
$\ker(\Prje_{F})=\sum_{\lb\in \Lb} F_{\<\lb\>} = (F)$. It proves the direct
implication. The converse implication is immediate.
\end{proof}

\section{Syzygies and Commutation Relations}\label{sec:syzygy}
In this section, we analyze more precisely the relations between the polynomials
of the border basis $F=(f_{\omega})_{\omega \in \partial B}$ where $f_{\omega}=\omega -
\rho_{\omega}$ with $\rho_{\omega} \in \<B\>$.
These relations or syzygies form a module that we denote by
$$
Syz(F) =\{ \sum_{\omega} h_{\omega} e_{\omega} \in \kx^{\partial B}; \sum_{\omega} h_{\omega} f_{\omega}=0\},
$$
where $(e_{\omega})_{\omega\in \partial B}$ is the canonical basis of
$\kx^{\partial B}$. If $F$ is a border basis family constructed from an initial
set of polynomials $H=\{h_{1},\ldots, h_{s}\}$, we can express $f_{\omega}\in F$ in terms of the
polynomials in $H$ (and conversely) so that a syzygy on $F$ induces a syzygy on $H$ (and
conversely). Therefore, we are going here to consider only the syzygies on
$F$.

For any $b \in \<B\>$ and $i\in [1,n]$, $x_i\, b \in \<B^{+}\>$ can be projected in $\<B\>$ along $F$:
$$
\Prj_{F}(x_i\, b) =x_{i}\, b -\sum_{\omega \in \partial B} \mu^{i}_{b,\omega} f_{\omega}.
$$
More generally, for any $p\in \kk[\x]$, we denote by $\mu^{i}_{p,\omega}$ the
coefficient $\mu^{i}_{b,\omega}$ in $\Prj_{F}(x_i\, b)$ for $b=\Prj_{F}(p)
\in \<B\>$ and by $\mu^{i}(p)=\sum_{\omega \in \partial B} \mu^{i}_{p}
f_{\omega}$. Notice that if $x_{i}p\in B$ then $\mu^{i}(p)=0$.

For any $m=x_{i_3}\cdots x_{i_k}\in B$ and $i_{1}, i_{2} \in 1\ldots n$,
the two decompositions 
$\Prj_{F}(x_{i_1} \Prj_{F}(x_{i_2} m)) = \Prj_{F}( x_{i_2} \Prj_{F}(x_{i_1} m))$ yield the syzygy
\begin{equation}\label{gen:syz}
x_{i_1} \sum_{\omega\in \partial B} \mu^{i_{2}}_{m,\omega} f_{\omega}
-
x_{i_2} \sum_{\omega\in \partial B} \mu^{i_{1}}_{m,\omega} f_{\omega} 
- \sum_{\omega\in \partial B} (\mu^{i_{2}}_{x_{i_{1}}
m,\omega}-\mu^{i_{1}}_{x_{i_2} m, \omega})\, f_{\omega} = 0.
\end{equation}
These relations can also be rewritten as:
$$ 
x_{i_{1}}\mu^{i_{2}}(m) + \mu^{i_{1}}(x_{i_{2}}m)
- x_{i_{2}}\mu^{i_{1}}(m) - \mu^{i_{2}}(x_{i_{1}}m) 
=0.
$$
We denote by $\Xi$ the module of $\kx^{\partial B}$ generated by these syzygies. 
It is also the module generated by the relations of commutation
$M_{i_{1}}\circ M_{i_{2}}(b) -M_{i_{2}} \circ M_{i_{1}}(b)=0$ for all $b\in
B$, $i_{i}, i_{2} \in [1\ldots n]$. 
We distinguish the following relations:
\begin{itemize}
 \item If  $x_{i_{1}} m, x_{i_2}m \in B$, then
$\mu^{i_{1}}(m)=\mu^{i_{2}}(m)=0$,
$\mu^{i_{1}}(x_{i_{2}}\,m)=\mu^{i_{2}}(x_{i-1}\,m)=0$
and $m$ does not yield a non-trivial relation of the form \eqref{gen:syz}.
 \item If $x_{i_{1}} m \in B$ but $x_{i_{2}} m \in \partial B$, and 
$x_{i_{1}}\, x_{i_{2}} \, m \in \partial B$,  then 
$\Prj_{F}(x_{i_{2}} m) = x_{i_{2}} m -f_{x_{i_{2}} m}$ and we have the relation
\begin{equation} \label{syz:next-door}
x_{i_{1}} \, f_{x_{i_{2}} m} -f_{x_{i_{1}} x_{i_{2}} m} +
\sum_{\omega\in \partial B} \mu^{i_{1}}_{x_{i_{2}}m,\omega} f_{\omega} =0.
\end{equation}
 \item If $x_{i_{1}} m \in B$, $x_{i_{2}} m \in \partial B$, and 
$x_{i_{1}}\, x_{i_{2}} \, m \in B$,  then 
$\Prj_{F}(x_{i_{2}} m) = x_{i_{2}} m -f_{x_{i_{2}} m}$ and we have the relation
\begin{equation} \label{syz:non-stair}
x_{i_{1}} \, f_{x_{i_{2}} m} +
\sum_{\omega\in \partial B} \mu^{i_{1}}_{x_{i_{2}}m,\omega} f_{\omega} =0.
\end{equation}
 \item If $x_{i_{1}} m \in \partial B$ and $x_{i_{2}} m \in \partial B$, then
$\Prj_{F}(x_{i_{1}} m) = x_{i_{1}} m -f_{x_{i_{1}} m}$, $\Prj_{F}(x_{i_{2}} m) =x_{i_{2}} m -f_{x_{i_{2}} m}$
and we have the syzygy
\begin{equation} \label{syz:across-street}
x_{i_{1}} f_{x_{i_{2}}\, m} - x_{i_{2}} f_{x_{i_{1}} m} -
\sum_{\omega\in \partial B} (\mu^{i_{2}}_{x_{i_{1}} m ,\omega} - \mu^{i_{1}}_{x_{i_{2}} m,\omega}) f_{\omega}=0
\end{equation}
\end{itemize}
The syzygies \eqref{syz:next-door} and \eqref{syz:across-street} are called
respectively {\em next-door} and {\em across-the-street
relations} in \cite{kekr:cbb-jpaa-05}. The syzygies  \eqref{syz:non-stair}
does not exist if $B$ is stable under division (since in this case,
$x_{i_{2}} m  \not\in B$ implies $x_{i_{1}}\,x_{i_{2}} m \not\in B$).
All these syzygies are simply the non-trivial relations induced by the ``border''
$C$-polynomials (see Definition \ref{def:cpol}). 

Let us now prove that the module of syzygies is generated by these
commutations syzygies.
For any monomial $m=x_{i_{1}}\cdots x_{i_{k}}$, we can rewrite by induction
its projection as:
\begin{equation}\label{eq:decomp}
\Prj_{F}(m)= \Prj_{F}(x_{i_1} \Prj_{F}(x_{i_2} \cdots )= m - \sum_{l=1\ldots k} 
x_{i_1}\cdots x_{i_{l-1}} \sum_{\omega\in \partial B} \mu^{i_{l}}_{x_{i_{l+1}}\cdots x_{k},\omega} f_{\omega}
\end{equation}
with the convention that $x_{i_{k+1}}=1$. 
We denote by 
$$
\Xi_{x_{i_1},\ldots, x_{i_{k}}}=\sum_{l=1\ldots k} x_{i_1}\cdots
x_{i_{l-1}} \sum_{\omega\in \partial B} \mu^{i_{l}}_{x_{i_{l+1}}\cdots x_{k},\omega} e_{\omega}
$$  
the corresponding element of $\kx^{\partial B}$.
Notice that this decomposition depends on the order of the decomposition $m=x_{i_1} \cdots x_{i_k}$ as a product of variables.

\begin{lemme}\label{lemme:permute}
If $m=x_{i_1}\cdots x_{i_k}= x_{j_1} \cdots x_{j_k}$ then 
$$
\Xi_{x_{i_1}\cdots x_{i_k}}-\Xi_{x_{j_1}, \ldots, x_{j_k}} \in \Xi.
$$
\end{lemme}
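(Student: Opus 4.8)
The plan is to reduce the general case of two arbitrary decompositions of $m$ as a product of variables to the case of a single adjacent transposition, and then to verify the adjacent-transposition case by a direct computation that exhibits the difference as a commutation syzygy of the generating type \eqref{gen:syz}. Since the symmetric group $S_k$ acting on the index word $x_{i_1}\cdots x_{i_k}$ is generated by adjacent transpositions, any two factorizations $x_{i_1}\cdots x_{i_k}$ and $x_{j_1}\cdots x_{j_k}$ of the same monomial $m$ are connected by a finite chain of words, each obtained from the previous one by swapping two neighbouring letters. Because $\Xi$ is a submodule of $\kx^{\partial B}$, it suffices to show that each single swap changes $\Xi_{x_{i_1},\ldots,x_{i_k}}$ by an element of $\Xi$; the telescoping sum over the chain then lands in $\Xi$. (One must be a little careful that all intermediate words are genuine factorizations of $m$, which they are, since permuting the letters of the word does not change the product.)

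The core step is therefore: fix a position $l$ and compare the word $w = x_{i_1}\cdots x_{i_{l-1}} x_{i_l} x_{i_{l+1}} x_{i_{l+2}}\cdots x_{i_k}$ with $w' = x_{i_1}\cdots x_{i_{l-1}} x_{i_{l+1}} x_{i_l} x_{i_{l+2}}\cdots x_{i_k}$. Write $m'' = x_{i_{l+2}}\cdots x_{i_k}\in \Mon$ (connected to $1$ as a divisor of an element of $B$, or more precisely we work at the level of the formal decomposition) and set $b = \Prj_F(x_{i_{l+2}}\cdots x_{i_k}\cdots)$ — actually I would phrase it directly in terms of the $\mu^i$ notation introduced before the lemma. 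In the expression for $\Xi_w$ the terms indexed by $l' < l$ are identical to those of $\Xi_{w'}$, and the terms with $l' \ge l+2$ are also identical, because for $l'\ge l+2$ the tail $x_{i_{l'+1}}\cdots x_{i_k}$ and the prefactor $x_{i_1}\cdots x_{i_{l'-1}}$ agree for $w$ and $w'$ (they involve the unordered pair $\{i_l,i_{l+1}\}$ in both the prefactor and, through $\Prj_F$ applied to the same monomial, in $\mu$). Hence
$$
\Xi_w - \Xi_{w'} = x_{i_1}\cdots x_{i_{l-1}}\Bigl( \mu^{i_l}(x_{i_{l+1}} m'') + x_{i_{l+1}}\mu^{i_{l+1}}(m'') - \mu^{i_{l+1}}(x_{i_l} m'') - x_{i_l}\mu^{i_{l+1}}(m'') \Bigr),
$$
(reading $\mu^i(p)$ as the element $\sum_\omega \mu^i_{p,\omega} e_\omega$ of $\kx^{\partial B}$, and using that $\Prj_F(x_{i_l} m'') = \Prj_F(x_{i_{l+1}} m'')$ only in the sense of the recursive projection being applied to $\Prj_F(m'')$). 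The parenthesized quantity is exactly the commutation syzygy associated to the monomial $m''$ and the pair of indices $(i_l, i_{l+1})$, i.e. the generator \eqref{gen:syz} (equivalently the relation $x_{i_l}\mu^{i_{l+1}}(m'') + \mu^{i_l}(x_{i_{l+1}} m'') - x_{i_{l+1}}\mu^{i_l}(m'') - \mu^{i_{l+1}}(x_{i_l} m'') = 0$ displayed just after \eqref{gen:syz}), so it lies in $\Xi$, and multiplying by the monomial $x_{i_1}\cdots x_{i_{l-1}}$ keeps it in the submodule $\Xi$.

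The step I expect to be the main obstacle is the bookkeeping that shows the terms with $l' \notin \{l, l+1\}$ really do cancel between $\Xi_w$ and $\Xi_{w'}$: one has to track that the coefficients $\mu^{i_{l'}}_{x_{i_{l'+1}}\cdots x_{i_k},\omega}$ depend on the tail monomial only through its \emph{value} (a monomial, hence insensitive to reordering of the factors $x_{i_l}, x_{i_{l+1}}$ inside it) and that the prefactors $x_{i_1}\cdots x_{i_{l'-1}}$ are likewise reorder-insensitive as monomials; once this is made precise the remaining two terms assemble into precisely the shape of \eqref{gen:syz} after recognising $\mu^{i_{l+1}}(m'')$ appears twice with opposite prefactors $x_{i_l}$ and $x_{i_{l+1}}$. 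A secondary (minor) point to state carefully is that the recursive identity \eqref{eq:decomp} is being used with $\Prj_F$ replaced by its extension $\Prje_F$ where needed, which is legitimate because $B$ is connected to $1$ and we are, implicitly, in the situation where a border basis exists so that all the relevant projections are defined; I would note this once at the start and not belabour it.
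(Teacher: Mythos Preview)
Your approach is essentially the same as the paper's: both reduce to a single adjacent transposition (using that $S_k$ is generated by these), observe that all terms of $\Xi_w$ and $\Xi_{w'}$ outside positions $l$ and $l+1$ agree (because the $\mu$-coefficients depend only on the tail \emph{monomial}, and the monomial prefactors are commutative), and identify the remaining four terms as the prefactor $x_{i_1}\cdots x_{i_{l-1}}$ times a generating commutation syzygy of type \eqref{gen:syz}.

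One correction: your displayed formula for $\Xi_w-\Xi_{w'}$ has the indices garbled in the second and fourth terms. The contribution of $\Xi_w$ at position $l+1$ has prefactor $x_{i_1}\cdots x_{i_l}$, so after factoring out $x_{i_1}\cdots x_{i_{l-1}}$ you get $x_{i_l}\,\mu^{i_{l+1}}(m'')$, not $x_{i_{l+1}}\,\mu^{i_{l+1}}(m'')$; symmetrically, the $\Xi_{w'}$ term at position $l+1$ gives $x_{i_{l+1}}\,\mu^{i_l}(m'')$, not $x_{i_l}\,\mu^{i_{l+1}}(m'')$. The correct expression is
\[
\Xi_w-\Xi_{w'}=x_{i_1}\cdots x_{i_{l-1}}\Bigl(x_{i_l}\,\mu^{i_{l+1}}(m'')+\mu^{i_l}(x_{i_{l+1}}m'')-x_{i_{l+1}}\,\mu^{i_l}(m'')-\mu^{i_{l+1}}(x_{i_l}m'')\Bigr),
\]
which is exactly the relation you quote immediately afterward, so the error is evidently a slip rather than a misunderstanding. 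One small point you (and the paper) leave implicit: the generators of $\Xi$ in \eqref{gen:syz} are indexed by $m\in B$, whereas $m''$ here is an arbitrary monomial; the bracketed expression is nevertheless in $\Xi$ because $\mu^{i}(m'')$ and $\mu^{i}(x_j m'')$ depend only on $\Prj_F(m'')\in\<B\>$, so the relation for $m''$ is a $\kk$-linear combination of the relations for the $b\in B$ occurring in $\Prj_F(m'')$.
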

\begin{proof}
Consider first a permutation of two variables  $m=m_1 x_{i_{l}} x_{i_{l+1}} m_2=  m_1 x_{i_{l+1}} x_{i_{l}} m_2$
(with $m_{1}= x_{i_{1}}\cdots x_{i_{l-1}}$, $m_2=x_{i_{l+2}}\cdots x_{i_{k}}$).
Using \eqref{eq:decomp}, the two way of projecting $m=m_1 x_{i_{l}}
x_{i_{l+1}} m_2=  m_1 x_{i_{l+1}} x_{i_{l}} m_2$ yield the syzygy 
\negskip\begin{eqnarray*} 
\lefteqn{\Xi_{\ldots, x_{i_{l}},x_{i_{l+1}},\ldots } -\Xi_{\ldots,
x_{i_{l+1}}, x_{i_{l}},\ldots} = m_{1} \ \times}\\
&& 
\left( x_{i_{l}} \sum_{\omega\in \partial B} \mu^{i_{l+1}}_{m_2,\omega} e_{\omega}
-
x_{i_{l+1}} \sum_{\omega\in \partial B} \mu^{i_{l}}_{m_2,\omega} e_{\omega}
+ \sum_{\omega\in \partial B} ( \mu^{i_{l}}_{x_{i_{l+1}} m_2,\omega} -\mu^{i_{l+1}}_{x_{i_{l}} m_2,\omega} ) e_{\omega}\right)
\end{eqnarray*}
which is in $\Xi$. By iterated permutations of two variables, we transform the sequence $x_{i_1},\ldots, x_{i_k}$ into the sequence
$x_{j_1},\ldots, x_{j_k}$. This allows us to rewrite 
$\Xi_{x_{i_1}\cdots x_{i_k}}$ into $\Xi_{x_{j_1}, \ldots, x_{j_k}}$
modulo $\Xi$.
\end{proof}

\begin{lemme}\label{lemme:cpol}
 If $m \, \theta = m'\, \theta'$ with $\theta, \theta'\in \partial B$ and
$m,m'\in \Mon$, then   
$$
m\, e_{\theta} \equiv m'\, e_{\theta'} + \sum_{\omega \in \partial B} p_{\omega}\, e_{\omega} \ {\rm{}mod}\ \Xi
$$ 
with $p_{\omega} \in \kk[\x]$ of degree $< \max(|m|,|m'|)$.
\end{lemme}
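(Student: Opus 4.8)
The plan is to reduce the general monomial identity $m\,\theta = m'\,\theta'$ to the case where $m$ and $m'$ are single variables (i.e.\ to the basic border $C$-polynomial relations \eqref{syz:next-door}, \eqref{syz:non-stair}, \eqref{syz:across-street} that generate $\Xi$), and to run an induction on $d:=\max(|m|,|m'|)$. The base case $d=1$ (so $m=x_{i}$, $m'=x_{j}$, and $x_i\theta = x_j\theta'$ with $\theta,\theta'\in\partial B$) is exactly one of the four cases enumerated before the statement: the element $x_i\, e_\theta - x_j\, e_{\theta'}$ differs from a $\Xi$-generator by a combination $\sum_\omega p_\omega e_\omega$ with $p_\omega\in\kk$ (degree $0 < 1$), since $\mu^{i}_{\cdot,\omega}$-type coefficients attached to monomials of $B$ are scalars. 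So the base case is immediate from the explicit relations already displayed.

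For the inductive step, write $n := m\,\theta = m'\,\theta'$ and pick a variable, say $x_i$, dividing $\theta$, so $\theta = x_i\,\theta_1$ for some monomial $\theta_1$; since $B$ is connected to $1$ we may choose $x_i$ so that $\theta_1\in B$ (or, if $\theta_1 \in \partial B$ we are already done by applying the induction hypothesis to the smaller identity). Set $\tilde m := m\,\theta_1$, a divisor of $n$ with $|\tilde m| = |n|-1$. The idea is: first move $x_i$ out of $\theta$, writing $m\,e_\theta = \tilde m\, e_{x_i\theta_1}$, and use the decomposition \eqref{eq:decomp} together with Lemma \ref{lemme:permute} to re-express $\tilde m\, e_{x_i\theta_1}$ in terms of $e_\omega$'s with strictly smaller coefficients, modulo $\Xi$ — this is where one peels off one variable at a time and invokes the basic commutation syzygies to absorb the order-dependence. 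Symmetrically do the same starting from $m'\,\theta'$, peeling off a variable of $\theta'$. What remains is an identity of the same shape $m\,\theta = m'\,\theta'$ but with the common monomial of strictly smaller size, to which the induction hypothesis applies; collecting all the absorbed terms, their polynomial coefficients have degree $< \max(|m|,|m'|)$ because each peeling step lowers the multiplier monomial by one variable.

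Concretely, the cleanest organization is: (i) choose, using connectedness of $B$, a ``staircase path'' inside $B$ from $1$ to a monomial adjacent to $\theta$, so that $\theta$ is reached from $B$ by a single variable, say $\theta = x_i b$ with $b\in B$; similarly $\theta' = x_j b'$ with $b'\in B$; (ii) apply the $C$-polynomial relation for the pair $(f_\theta, f_{\theta'})$ — this is precisely a generator of $\Xi$ — to get $m\,e_\theta \equiv (\text{terms } x_i m\cdot(\cdots)) \bmod \Xi$ where the new terms involve $e_\omega$ multiplied by monomials of size $\le \max(|m|,|m'|)$, but with the ``top'' contribution $m\,e_\theta - m'\,e_{\theta'}$ replaced by lower-degree remainders via $\Prj_F$; (iii) iterate, using Lemma \ref{lemme:permute} each time to guarantee the rewriting does not depend on the chosen order of variables. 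The main obstacle I anticipate is bookkeeping the degree bound through the iteration — one must check that every time Lemma \ref{lemme:permute} is invoked, the $\Xi$-generator used has multiplier monomial $m_1$ of size $< \max(|m|,|m'|)$, and that the residual $\mu$-coefficients produced by $\Prj_F$ carry monomials of $B$ (hence contribute only to the degree count, not to the ``leading'' part). Everything else is the routine telescoping already set up by \eqref{eq:decomp} and Lemma \ref{lemme:permute}; the real content is that the order-independence supplied by Lemma \ref{lemme:permute} lets us always peel from whichever side is convenient, so the two sides of $m\,\theta=m'\,\theta'$ can be brought together one variable at a time.
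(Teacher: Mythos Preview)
Your proposal names the right two ingredients --- the decomposition formula \eqref{eq:decomp} and Lemma \ref{lemme:permute} --- but wraps them in an induction that is unnecessary and, as written, contains genuine errors. The paper's proof is a direct one-shot application with no induction on $\max(|m|,|m'|)$: write $m=x_{i_1}\cdots x_{i_d}$ and pick $b\in B$ with $\theta=x_{i_{d+1}}b$ (possible since $\theta\in\partial B$), then extend by a path in $B$ from $1$ to $b$. Applying \eqref{eq:decomp} to this ordering of the product $m\theta$, all contributions with $l\ge d+2$ vanish (those steps stay inside $B$, so the $\mu$'s are zero), the $l=d+1$ contribution is exactly $m\,e_\theta$, and the terms with $l\le d$ carry monomial coefficients $x_{i_1}\cdots x_{i_{l-1}}$ of size $\le d-1$. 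Doing the same for $m'\theta'$ and invoking Lemma \ref{lemme:permute} on the two resulting $\Xi$-vectors for the \emph{same} monomial $m\theta=m'\theta'$ immediately yields the claim with the stated degree bound.

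Two concrete problems in your write-up. First, the line ``$m\,e_\theta=\tilde m\,e_{x_i\theta_1}$'' is meaningless: since $\theta=x_i\theta_1$, the symbols $e_\theta$ and $e_{x_i\theta_1}$ denote the \emph{same} basis vector of $\kx^{\partial B}$, while $m\ne \tilde m=m\theta_1$; the $e_\omega$ are formal module generators, not monomials you can factor. Second, your base case is not ``exactly one of the four cases enumerated'': those generators of $\Xi$ are indexed by monomials $b\in B$ (with $x_{i_1}b,x_{i_2}b$ on the border), and when $x_i\theta=x_j\theta'$ the natural candidate $b=\theta/x_j=\theta'/x_i$ need not lie in $B$ if $B$ is merely connected to $1$ rather than closed under division (e.g.\ $B=\{1,x_1,x_2,x_1x_2,x_1x_2x_3\}$, $\theta=x_1x_3$, $\theta'=x_2x_3$, where $b=x_3\notin B$). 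Handling even this case already forces you back to \eqref{eq:decomp} and Lemma \ref{lemme:permute}, at which point the induction buys nothing. Drop the induction and run the direct argument.
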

\begin{proof}
If $m=x_{i_1} \cdots x_{i_{d}}$ with $\theta = x_{i_{d+1}} b \in \partial B$ and $b\in B$, the projection formula \eqref{eq:decomp}
of $m\, \theta$ has the form
$$
\Prj_{F}(m\, \theta)= m\, \theta - x_{i_1} \cdots x_{i_{d}} f_{\theta} -
\sum_{l=1\ldots d} x_{i_1}\cdots x_{i_{l-1}} \sum_{\omega\in \partial B} \mu^{i_{l}}_{x_{i_{l+1}}\cdots x_{i_{d}} \theta ,\omega} f_{\omega}
$$
since $x_{i_{d+1}}\, b= \theta$ and $\Prj_{F}(\theta)= \theta -f_{\theta}$. 
If $m'=x_{j_{1}} \cdots x_{j_{d'}}$ and $\theta' \in \partial B$ with $m \theta = m'\theta'$, 
by Lemma  \ref{lemme:permute}, the two decompositions yield the syzygy
\negskip\begin{eqnarray*} 
\lefteqn{x_{i_1} \cdots x_{i_{d}} e_{\theta} - x_{j_1} \cdots x_{j_{d'}} e_{\theta'}}\\
&+& \sum_{l=1\ldots d} x_{i_1}\cdots x_{i_{l-1}} \sum_{\omega\in \partial B}
\mu^{i_{l}}_{x_{i_{l+1}}\cdots x_{i_{d}} \theta ,\omega} e_{\omega} \\
&-&\sum_{l=1\ldots d'} x_{j_1}\cdots x_{j_{l-1}} \sum_{\omega\in \partial B} \mu^{j_{l}}_{x_{j_{l+1}}\cdots x_{j_{d'}} \theta' ,\omega} e_{\omega}
\end{eqnarray*}
as an element of $\Xi$. 
This syzygy is of the form $m\, e_{\theta} - m'\, e_{\theta'} + \sum_{\omega \in \partial B} p_{\omega}\, e_{\omega}$
with $\deg(p_{\omega}) < \max(d,d')$, which proves the lemma. 
\end{proof}

This yields the following result for a border basis, conjectured in
\cite{kekr:cbb-jpaa-05} for the case where $B$ is stable by division:
\begin{theorem} Let $B \subset \Mon$ be connected to $1$
and let $F=(f_{\omega})_{\omega\in \partial B}$ be a border basis for
$B$. Then $Syz(F)$ is generated by the relations \eqref{syz:next-door}, \eqref{syz:non-stair} and
\eqref{syz:across-street}. 
\end{theorem}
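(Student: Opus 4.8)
The plan is to show that every syzygy of $F$ reduces, modulo the commutation module $\Xi$, to zero; since the generators \eqref{syz:next-door}, \eqref{syz:non-stair} and \eqref{syz:across-street} are exactly the nontrivial ``border'' commutation syzygies that span $\Xi$ (as noted right before the theorem), this gives $Syz(F) = \Xi$ and hence the claim. So the core of the argument is the inclusion $Syz(F) \subseteq \Xi$; the reverse inclusion is immediate since each commutation relation holds identically.

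First I would fix a syzygy $\sigma = \sum_{\omega} h_{\omega} e_{\omega}$ with $\sum_{\omega} h_{\omega} f_{\omega} = 0$, and expand each $h_{\omega}$ into monomials, so that $\sigma$ becomes a $\kk$-linear combination of elements of the form $m\, e_{\theta}$ with $m \in \Mon$ and $\theta \in \partial B$. The key tool is Lemma~\ref{lemme:cpol}: whenever $m\,\theta = m'\,\theta'$ it lets me rewrite $m\, e_{\theta}$ as $m'\, e_{\theta'}$ plus a tail $\sum_{\omega} p_{\omega} e_{\omega}$ with $\deg(p_{\omega}) < \max(|m|,|m'|)$, modulo $\Xi$. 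Concretely, to each $m\, e_{\theta}$ associate the monomial $m\,\theta \in \partial B \cdot \Mon$; using Lemma~\ref{lemme:cpol} I can push all terms $m\, e_{\theta}$ with the same product $m\,\theta = \mu$ to a single chosen representative $e_{\theta_\mu}$ times the appropriate cofactor, at the cost of introducing lower-degree terms. This is the standard ``normal form of the syzygy'' reduction: I would organize the induction on the total degree (the size $|m\,\theta|$, which is bounded since $\sigma$ has finitely many terms), treating the highest-degree part first and iterating downward.

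After this reduction, modulo $\Xi$ the syzygy $\sigma$ is congruent to $\sum_{\mu} c_{\mu}\, m_{\mu}\, e_{\theta_\mu}$ where now the products $m_{\mu}\theta_{\mu}$ are pairwise distinct monomials, plus nothing else (the tails have been absorbed by the downward induction). Applying $\Prj_{F}$ to the syzygy relation $\sum h_\omega f_\omega = 0$, and using that $\Prj_{F}(\mu) = \mu - \Xi_{\text{decomp}}\cdot f$ via \eqref{eq:decomp}, I get that the ``leading'' monomial parts $\sum_{\mu} c_{\mu}\, m_{\mu}\,\theta_{\mu}$ must vanish in $\kx$; since these monomials are distinct, all $c_{\mu} = 0$. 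Hence $\sigma \equiv 0 \bmod \Xi$, i.e. $\sigma \in \Xi$. Finally I invoke the explicit description preceding the theorem: $\Xi$ is generated by the four bulleted families of commutation relations, of which the first is trivial and the remaining three are precisely \eqref{syz:next-door}, \eqref{syz:non-stair} and \eqref{syz:across-street}; therefore $Syz(F) = \Xi$ is generated by those, which completes the proof.

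\textbf{Main obstacle.} The delicate point is the bookkeeping in the degree induction: one must check that collapsing all terms $m\, e_\theta$ with a fixed product $m\theta$ onto one representative, via repeated application of Lemma~\ref{lemme:cpol} and Lemma~\ref{lemme:permute}, genuinely terminates and does not reintroduce top-degree terms — i.e. that the tails $p_\omega$ produced are strictly lower degree and get handled at the next stage down. The other subtlety is the final vanishing argument: one needs that, after the reduction, the surviving leading terms $m_\mu\theta_\mu$ are honestly distinct monomials of $\kx$ so that their coefficients can be read off independently from $\sum h_\omega f_\omega = 0$; this requires care because a priori several $(m,\theta)$ can have the same product, which is exactly why Lemma~\ref{lemme:cpol} was set up to identify them in the first place.
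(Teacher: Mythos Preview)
Your overall strategy—reduce $\sigma$ modulo $\Xi$ using Lemma~\ref{lemme:cpol} until it is forced to be zero—is the same as the paper's. The gap is in your final vanishing step. You reduce to $\sigma \equiv \sum_{\mu} c_{\mu}\, m_{\mu}\, e_{\theta_\mu}$ with the products $m_{\mu}\theta_{\mu}$ pairwise distinct, and then claim that ``the leading monomial parts $\sum_{\mu} c_{\mu}\, m_{\mu}\theta_{\mu}$ must vanish in $\kx$''. This does not follow. Writing $f_{\theta_\mu}=\theta_\mu-\rho_{\theta_\mu}$ with $\rho_{\theta_\mu}\in\<B\>$, the relation $\sum_\mu c_\mu m_\mu f_{\theta_\mu}=0$ only says $\sum_\mu c_\mu m_\mu\theta_\mu=\sum_\mu c_\mu m_\mu\rho_{\theta_\mu}$. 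Nothing prevents a monomial $m_{\mu}\theta_{\mu}$ from appearing in some tail $m_{\mu'}\rho_{\theta_{\mu'}}$ (take e.g.\ $B=\{1,x,y\}$: then $x^{2}$ can occur both as $1\cdot\theta$ for $\theta=x^{2}\in\partial B$ and inside $x\cdot\rho_{xy}$ if $\rho_{xy}$ has an $x$-term). So distinctness of the products is not enough to read off the $c_\mu$. Applying $\Prj_F$ to $0$ gives $0$ and does not rescue the argument either.

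The paper fixes exactly this by working with the $B$-index $\delta_B$ rather than the degree. It first uses Lemma~\ref{lemme:cpol} to rewrite every term $m\,e_{\theta}$ so that $\delta_B(m\theta)=|m|+1$ (the ``minimal'' representation of the product). The point of this normalisation is that then $m\theta$ is the \emph{unique} monomial of maximal $B$-index in $m f_{\theta}$, since every monomial of $m\rho_{\theta}$ has $B$-index at most $|m|$. Now one looks at the terms of maximal $B$-index in $\sum p_\omega f_\omega=0$: they are exactly the $m\theta$'s of top $B$-index and they must cancel, forcing a coincidence $m\theta=m'\theta'$ with $\theta\ne\theta'$. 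One then applies Lemma~\ref{lemme:cpol} to that pair, lowering either the number of top-$B$-index terms or the degrees, and iterates. Your induction on $|m\theta|$ alone does not provide the separation between ``head'' and ``tail'' that the $B$-index does; that preliminary normalisation step is the missing idea.
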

\begin{proof}
Let $\sigma = \sum_{\omega} p_{\omega}\, e_{\omega} \in Syz(F)$. 
Consider in this sum, any term $\lambda\, m\, e_{\theta}$  where $\lambda \in \kk-\{0\}$, 
$m \in \Mon$, $\theta \in \partial B$ and $\delta_{B}(m\, \theta)< |m|+1$. 
Then, there exists $m' \in \Mon$, $\theta' \in \partial B$ such that 
$m\,\theta= m'\theta'$ and
$\delta_{B}(m'\, \theta')= |m'|+1$. This implies that $|m|> |m'|$.
Applying lemma \ref{lemme:cpol}, the product $m\, e_{\theta}$ can be reduced modulo $\Xi$, to
$$
m' e_{\theta'} + \sum_{\omega \in \partial B} q_{\omega}\, e_{\omega},
$$
with $\deg(q_{\omega})< |m|$.
Iterating this reduction, we may assume that each term 
$\lambda\, m\, e_{\omega}$ ($\lambda \in \kk-\{0\}$, $m\in \supp(p_{\omega})$, $\omega \in \partial B$)
is such that $\delta_B(m \, \omega)= |m| +1 $. Notice that the polynomial
$m \,f_{\omega}$ has only one monomial of maximal $B$-index, which is $m \, \omega$. 

As $\sum_{\omega} p_{\omega}\, f_{\omega} =0$, there exist $\theta \neq
\theta' \in \partial B$ and monomials $m \in \supp(p_{\theta})$, $m' \in \supp(p_{\theta'})$
such that $m \, \theta = m'\, \theta'$ and $\delta_{B}(m \, \theta)$ is maximal among all terms of the syzygy.
By lemma \ref{lemme:cpol}, we can replace
this pair  $(m \, e_{\theta} , m'\, e_{\theta'})$ modulo $\Xi$ by 
a sum of terms of smaller degree.
This transformation reduces either the number of terms with maximal $B$-index
or the degree of the polynomials $p_{\omega}$.

Since we cannot iterate it infinitely, we deduce that
$\sigma$ is in $\Xi$.
\end{proof}

\section{Algorithmic issues}\label{sec:algo}

From the preceding sections, we deduce an algorithm as it is done
in \cite{mt:gnfpss-issac-05}. The main idea is to translate the previous
concepts into linear algebra. From Section \ref{sec:normform} 
to compute effectively a normal form, one has to find a (monomial) basis $B$
of the quotient algebra connected to $1$, and border
relations such that the multiplication operators commute.  The
algorithm described in \cite{mt:gnfpss-issac-05}, is a fix point
method, which updates
\begin{itemize}
 \item a potential monomial basis $B$ of the quotient algebra,
 \item a set $P$ of polynomials or rewriting rules (with one monomial of
their support in $\partial B$ and the remaining monomials in $B$),
\end{itemize}
until a fix point is reached.   
At each step of the algorithm, the following operations are performed:
\begin{enumerate}
 \item The set $P'$ of polynomials of $P^{+}$ with support in $B^{+}$ is 
computed. 
 \item By taking linear combinations, a basis $\tilde{P}$ of the vector space $\<P'\>$
is computed, such that each element of this basis has 
at most one monomial in $\partial B$ (and the other in $B$).
 \item The $C$-polynomials of the elements of $P$ with their support in
$B^{+}$ are computed and reduced by $\tilde{P}$.
 \item If non-zero polynomials with support in $B$ appear, the potential
basis $B$ and the polynomial set $P$ are updated. 
The update of $B$ is done by removing some parts of $B$.
The update of $P$ is done by combining the elements of $\tilde{P}$ and
the reduced $C$-polynomials, in order to get rewriting rules for the new set $B$.
\end{enumerate}
The details and the technical proof of termination of the algorithm are
available in \cite{mt:gnfpss-issac-05}.
%
%
We mention here that we have improved the algorithm
described in this paper since then: the ways the degree drops are
treated in \cite{mt:gnfpss-issac-05} is now done to avoid to the repeated
execution of similar reductions.

\section{Stability of the bases}\label{sec:stability}

This section is devoted to the study of the stability under numerical
perturbations, of the bases computed by the previous algorithm.

In many real-life problems, the system $\fb= (f_1,\ldots,f_s)$ to be solved is
given only with limited accuracy. However, most of the time, one also knows
that the structure of the solutions is invariant in a small neighborhood of
the system. Hence one of the feature that is often required to polynomial
solvers is to produce a representation of the quotient algebra that is stable
in a small neighborhood of the initial system. The structural numerical
stability of the basis is expected in order to have a smooth behavior
of the coefficients of the representation in the neighborhood of $\fb =
(f_1,\ldots,f_s)$.  By definition, a neighborhood of $\fb$ is an open set in
the space of vector of polynomials $(h_1, \ldots, h_s)$ such that
$\Lambda(h_{i})\preceq \Lambda(f_{i})$ ($i=1,\ldots,n$) and which contains $\fb$. 
For $\epsilon>0$, we define by $N_{\epsilon}(\fb)$ the set of systems $(h_1,
\ldots, h_s)$ such that $\Lambda(h_{i})\preceq \Lambda(f_{i})$ and the
coefficient vector of $h_{i}$ is at most at distance $\epsilon$ from the
coefficient vector of $f_{i}$ (for the $\infty$-norm).

%
%

\begin{assume} \label{def:gamma-support}
Hereafter, $\gamma$ denotes a choice function refining a reducing grading
$\Lambda$, such that for all $p\in R$, $\gamma(p)$ depends only on the
support of $p$ and not on the numerical value of its monomial coefficients 
(e.g. Macaulay's choice function, grevlex choice function,\dots).
Let $\gamma_{\epsilon}$ be the choice function that for any $p\in R$, applies
the choice function $\gamma$ on the monomials of $p$, which coefficient 
norm is bigger than $\epsilon$.
\end{assume}

\begin{theorem}\label{thm:stab1}
  Let $\fb=(f_1,\ldots,f_s)$ be a zero dimensional polynomial system
such that in a neighborhood $U$ of $\fb$, all systems have the same number 
$D$ of complex solutions, counted with multiplicities. 
Then for all $\epsilon>0$ small enough, there exists $\nu>0$ such
that for any system $\fb' \in N_{\nu}(\fb)
\subset U$, the basis $B$ computed with $\gamma$ satisfying Assumption
\ref{def:gamma-support} for the
system $\fb$ is also a basis for the system $\fb'$.
\end{theorem}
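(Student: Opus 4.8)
The plan is to exploit the continuity of the linear-algebra operations underlying the algorithm of Section~\ref{sec:algo} together with the hypothesis that the number of solutions is locally constant. First I would observe that the algorithm is a finite fixed-point procedure, so it performs only finitely many steps; at each step it computes the polynomials of $P^{+}$ with support in $B^{+}$, extracts a basis $\tilde P$ of $\<P'\>$ by Gaussian elimination, forms and reduces the $C$-polynomials, and updates $B$ and $P$. Since $\gamma$ (hence the choice of pivot monomials $\gamma(p)$) depends only on supports and not on coefficient values, and since all the coefficient vectors involved depend polynomially on the coefficients of $\fb$, the entire computation is governed by a finite sequence of rank conditions on matrices whose entries are polynomials in the coefficients of $\fb$. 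Replacing $\gamma$ by $\gamma_{\epsilon}$ (Assumption~\ref{def:gamma-support}) discards the monomials whose coefficients have dropped below $\epsilon$; this is exactly the device that makes the support data, and therefore the combinatorial branching of the algorithm, locally constant.

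Next I would argue the key stability point: by Corollary~\ref{corsum} the output $B$ of the algorithm satisfies $\kx_{\lb}=\<B\>_{\lb}\oplus F_{\<\lb\>}$ for all relevant $\lb$, equivalently $\mathrm{card}(B)=D$ and the relevant matrix (the coefficient matrix of $B$ together with the generators of $F_{\<\lb\>}$ in some degree large enough, i.e.\ a suitable Macaulay-type matrix) has maximal rank. Rank-maximality is an open condition on the matrix entries, and by hypothesis the cardinality $D=\dim_{\kk}\Ac$ is constant on the neighborhood $U$. Hence for $\fb'$ close enough to $\fb$ the same pivot set (the same $B$) still yields a matrix of the same maximal rank, so $\<B\>$ is still a complement of $F'_{\<\lb\>}$ in $\kx_{\lb}$, i.e.\ $B$ is a basis of the quotient algebra of $\fb'$ as well. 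Concretely: run the (deterministic, support-driven) algorithm symbolically on $\fb$, record the finitely many matrices $A_{1},\dots,A_{r}$ and the pivot choices made; each pivot choice amounts to finitely many inequalities ``such-and-such minor $\ne 0$'' and ``such-and-such coefficient has norm $>\epsilon$''. Choosing $\epsilon$ small enough that none of the relevant nonzero coefficients of the intermediate polynomials for $\fb$ falls below $\epsilon$, and then $\nu$ small enough that on $N_{\nu}(\fb)$ all those finitely many minors stay nonzero and all those coefficients stay above $\epsilon$, guarantees that the algorithm run on $\fb'$ makes exactly the same branching decisions and outputs the same $B$.

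Then I would assemble the argument: fix $\epsilon>0$ smaller than the minimum over the finitely many nonzero coefficients arising in the run of the algorithm on $\fb$ (and small enough that $\gamma_{\epsilon}=\gamma$ on all those polynomials); let $M(\fb)$ be the finite collection of determinants of the pivotal square submatrices used, all of which are nonzero at $\fb$ and depend continuously (polynomially) on the coefficients of $\fb$; choose $\nu>0$ with $N_{\nu}(\fb)\subset U$ and small enough that every element of $M(\cdot)$ stays nonzero and every recorded coefficient stays above $\epsilon$ on $N_{\nu}(\fb)$. For $\fb'\in N_{\nu}(\fb)$ the algorithm then produces the same combinatorial data, hence the same $B$; and since $\fb'\in U$ has $D=\mathrm{card}(B)$ solutions counted with multiplicity, Corollary~\ref{corsum} (applied to the reducing family produced for $\fb'$) shows $B$ is indeed a basis for $\fb'$.

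The main obstacle I expect is making rigorous the claim that the algorithm's branching is entirely controlled by support-level data and finitely many open conditions — in particular handling the ``degree drop'' steps, where $B$ is shrunk: one must check that the decision to remove part of $B$ is also triggered by a rank condition (a coefficient dropping, replaced by a $\gamma_{\epsilon}$-comparison) rather than by an exact cancellation that a generic perturbation would destroy. This is precisely where the constant-number-of-solutions hypothesis is essential: it forbids spurious extra cancellations, since any such cancellation would change $\dim_{\kk}\Ac'$ and hence the solution count, contradicting $\fb'\in U$. I would therefore phrase the removal step as: a degree drop occurs at $\fb$ iff a certain submatrix has rank deficiency, and because $\dim\Ac$ is locally constant this rank deficiency persists with the same pattern on $N_{\nu}(\fb)$, so the same monomials are removed. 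With that in hand the rest is the routine continuity-of-minors packaging sketched above.
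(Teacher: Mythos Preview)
Your approach is workable but takes a different, more laborious route than the paper's, and the gap you flag at the end is more serious than you suggest.

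The paper does \emph{not} re-run the algorithm on $\fb'$ or track its branching. Instead it extracts from the successful run on $\fb$ a single linear-algebraic witness: the Macaulay-type matrix $M$ of all monomial multiples $m f_i$ up to degree $\kappa+d_0$ (where $\kappa$ is the number of loops performed), and its block $N$ of columns indexed by monomials \emph{not} in $B$. The algorithm having terminated on $\fb$ means precisely that this block is ``invertible'' (full column rank, with an explicit triangular witness), so every monomial outside $B$---in particular every element of $\partial B$---reduces into $\<B\>$ modulo the row span. Invertibility is an open condition, hence persists for $\fb'\in N_\nu(\fb)$; therefore $B$ \emph{generates} $\Ac'=R/(\fb')$. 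The constant-dimension hypothesis $D=|B|=\dim\Ac'$ then upgrades ``generates'' to ``is a basis''. That is the whole proof: one open condition, one counting argument.

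What you sketch instead---that the algorithm applied to $\fb'$ with $\gamma_\epsilon$ follows the same combinatorial path and outputs the same $B$---is essentially the content of Theorem~\ref{thm:stab2}, not Theorem~\ref{thm:stab1}. This forces you to confront persistence of \emph{closed} conditions (exact reductions to zero) at every intermediate step, which the paper's argument never needs. Your proposed fix, that the dimension hypothesis forbids spurious extra cancellations, is not airtight as stated: at an intermediate step $k$ the tentative basis $B_k$ is in general strictly larger than the final $B$, so a $C$-polynomial for $\fb'$ that reduces to a nonzero element of $\<B_k\>$ may be supported entirely on $B_k\setminus B$ and say nothing directly about $\dim\Ac'$. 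One would have to argue that the algorithm's subsequent steps on $\fb'$ cannot ``recover'' the lost monomials, which is plausible (since $B$ only shrinks) but requires a careful induction you have not supplied. The paper's single-matrix argument sidesteps all of this.
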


\begin{proof}
%
  Let us consider the matrix $M$ whose rows correspond to the
  coefficient vectors of the monomial multiples $m\,f_i$, with
  $\deg(m)+\deg(f_i)\leq \kappa+d_{0}$ where $\kappa$ is the number of loops
  in Algorithm \cite{mt:gnfpss-issac-05} and $d_{0}$ the maximum degree of
the polynomials $f_{i}$. The columns are indexed by all the monomials of degree
  $\leq \kappa+d_{0}$. 

  We denote by $B$ the monomial set obtained as a basis of $\Ac=R/I$ by
  applying Algorithm \cite{mt:gnfpss-issac-05} to $f_1,\ldots,f_s$ with the
  choice function $\gamma$. By construction, the set of monomials indexing
  the columns of $M$ contains $\partial B$.

  Let $M^g$ be the same matrix as $M$ but constructed with the polynomials
  $f_{i}$ replaced by {\em generic } equations, i.e.  equations with
  indeterminate coefficients having the same support as $f_1,\ldots,f_s$.
  Let $N$ be the block of columns of $M$ indexed by monomials not in
  $B$ and let $N^{g}$  be the corresponding block in $M^{g}$.

  Since the operations of Algorithm \cite{mt:gnfpss-issac-05} consist in
  computing linear combinations of some monomial multiples of the polynomials
  $P$ (see Section \ref{sec:algo} and \cite{mt:gnfpss-issac-05})
  and thus of monomial multiples of $f_{i}$ of degree $\leq
  \kappa+d_{0}$, the complete computation can be reinterpreted as an
  optimized triangulation procedure of the block $N$.

  The block $N^{g}$ specialized at $\fb$ is invertible, because any monomial
  not in $B$ can be reduced by the computed border basis of $\fb$ to an
  element in $\<B\>$. This implies that $N^{g}$ specialized at $\fb'$ is also
  invertible, for $\fb' \in N_{\nu}(\fb)$ with $\nu>0$ sufficiently
  small. Therefore any monomial of $\partial B$ can be reduced modulo the
  ideal $(\fb')$ to an element in $\<B\>$. Consequently, $B$ (which contains
  $1$) is a generating set of the quotient algebra $\Ac'=R/(\fb')$ for any
  $\fb' \in N_{\nu}(\fb)$.

  As the number of solutions $D=|B|$ (counted with multiplicity) is left
  unchanged by small perturbations in the neighborhood $N_{\nu}(\fb)$ of
  $\fb$, the monomial set $B$, that has exactly cardinality $D$, is also a basis
  of the quotient algebra $\Ac'=R/(\fb')$ for the perturbed system
  $\fb'$. 
\end{proof}

Consider now a slight modification of Algorithm 
\cite{mt:gnfpss-issac-05}: the
coefficients whose norm is less than $\epsilon$ are simply ignored in all the steps of Algorithm
\cite{mt:gnfpss-issac-05}. This means that they will not be taken into
account for choosing a {\em leading} monomial, or deciding if a
polynomial is nonzero. This small variant will be denoted as
the $\epsilon$-algorithm in the next theorem.  Remark here that this
behavior is quite classical in fact, it is more or less what is usually
done when neglecting small coefficients in a numerical algorithm.
Remark also that the following theorem performs rigorously, and that
the computed result is {\em not} an approximation of the true quotient
algebra!

Then the following holds:
\begin{theorem}\label{thm:stab2}
  Let $\fb=(f_1,\ldots,f_s)$ be a zero dimensional polynomial system such
  that in a neighborhood $U$ of $\fb$, all systems have the same number $D$
  of complex solutions, counted with multiplicities.  Then for all
  $\epsilon>0$ small enough, there exists $\nu>0$ such that for any system
  $\fb' \in N_{\nu}(\fb) \subset U$, the basis $B$ computed with $\gamma$
  satisfying Assumption
\ref{def:gamma-support} for the system $\fb$ is also the basis
  obtained with $\gamma_\epsilon$ and the $\epsilon$-algorithm for the system
  $\fb'$.  \end{theorem}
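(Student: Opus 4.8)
The goal is to upgrade Theorem \ref{thm:stab1} from ``$B$ is still a basis for $\fb'$'' to ``the $\epsilon$-algorithm run on $\fb'$ with $\gamma_\epsilon$ actually \emph{produces} $B$''. The plan is to compare the run of the $\epsilon$-algorithm on $\fb'$ step by step with the run of the exact algorithm on $\fb$, and show by induction on the loop counter that the two runs maintain the \emph{same} pair $(B,P)$ up to the numerical value of the coefficients, as long as $\nu$ is small compared with $\epsilon$ and $\epsilon$ is small compared with the relevant quantities of the exact run on $\fb$. The key point is that all the combinatorial decisions the algorithm makes --- which monomial is the leading monomial via $\gamma$, whether a reduced $C$-polynomial is zero, which part of $B$ to remove in a degree drop --- depend only on the \emph{support} of the intermediate polynomials (Assumption \ref{def:gamma-support} is exactly what guarantees this for $\gamma$), and the role of $\gamma_\epsilon$ together with the $\epsilon$-thresholding is precisely to make the perturbed run see the same supports.

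First I would fix, for the exact run of Algorithm \cite{mt:gnfpss-issac-05} on $\fb$, the finite list of all nonzero coefficients that appear in every intermediate polynomial produced during the $\kappa$ loops (there are finitely many since the degree is bounded by $\kappa+d_0$ and the run terminates). Let $c_0>0$ be the minimum of the absolute values of all these nonzero coefficients. I then choose $\epsilon < c_0/2$, so that in the exact run no coefficient is ever below the threshold: the $\epsilon$-algorithm applied to $\fb$ would behave identically to the exact algorithm. Next, as in the proof of Theorem \ref{thm:stab1}, the whole computation is an optimized triangulation of the block $N$ of the matrix $M$, i.e. a finite sequence of linear-algebra operations (row echelon steps, kernel computations for the $\tilde P$ step, substitutions for the $C$-polynomial reductions) whose coefficients are continuous --- in fact rational --- functions of the input coefficients, \emph{as long as the same pivots / leading monomials are selected}. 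By continuity of these finitely many rational maps and the fact that their denominators (the pivots, e.g. the invertibility of $N^g$ and of the successive pivot minors encountered) are nonzero at $\fb$, there is $\nu>0$ so small that for $\fb'\in N_\nu(\fb)$ every coefficient produced along the perturbed run is within, say, $c_0/4$ of the corresponding coefficient in the exact run on $\fb$.

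With this $\nu$, I would run the induction: assume after $k$ loops the exact run on $\fb$ has state $(B_k,P_k)$ and the $\epsilon$-run on $\fb'$ has state $(B_k,P_k')$ with $P_k'$ having the same supports as $P_k$ and coefficient-wise within $c_0/4$. In loop $k+1$: (i) forming $P'$ (the members of $P^+$ supported in $B^+$) is a purely combinatorial step, identical in both runs; (ii) computing a basis $\tilde P$ of $\langle P'\rangle$ with at most one monomial in $\partial B$ --- here the pivot choices are the same because $\gamma$ (equivalently $\gamma_\epsilon$, since no coefficient crosses the threshold thanks to the $c_0/4$ estimate and $\epsilon<c_0/2$) sees the same leading monomials, and the resulting coefficients stay within $c_0/4$ by the continuity bound; (iii) the $C$-polynomials and their reductions by $\tilde P$ again use the same $\gamma$-leading monomials and the same arithmetic, so supports match and coefficients stay close; (iv) the test ``non-zero polynomial with support in $B$ appears'' gives the same answer in both runs --- a coefficient that is exactly $0$ in the exact run is, in the perturbed $\epsilon$-run, of norm $<c_0/4<\epsilon$ and hence ignored, while a coefficient that is nonzero in the exact run has norm $\ge c_0 > \epsilon + c_0/4$ in the perturbed run and is kept --- so the decision of whether and how to drop part of $B$ is identical; (v) consequently $B_{k+1}$ and $P_{k+1}'$ are updated in exactly the same way, preserving the invariant. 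After $\kappa$ loops both runs reach a fixed point with the same monomial basis $B$, which by Theorem \ref{thm:stab1} is indeed a basis of $\Ac'=R/(\fb')$; this is the desired conclusion.

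\textbf{Main obstacle.} The delicate point is step (iv), the interplay of the two thresholds: I must choose $\epsilon$ and $\nu$ in the right order --- $\epsilon$ first, \emph{depending only on the exact run on $\fb$} (so that $\epsilon<c_0/2$ makes the $\epsilon$-algorithm on $\fb$ reproduce the exact algorithm), and then $\nu=\nu(\epsilon,\fb)$ small enough that the perturbation never moves a genuinely-nonzero coefficient below $\epsilon$ nor a genuinely-zero one above it. One must also be slightly careful that the ``continuity of finitely many rational maps'' argument is legitimate: the list of maps (pivot selections, echelon steps, kernel bases) is itself determined by the combinatorial trace of the run, so strictly speaking one argues by induction, at each loop invoking continuity only of the single rational step performed there, with its denominator nonzero at $\fb$ --- this is why the induction is organized loop-by-loop rather than by citing a global continuity statement. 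The termination of the perturbed run is not a separate issue, since by the invariant it performs the same $\kappa$ loops as the exact run.
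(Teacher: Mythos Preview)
Your proposal is correct and follows essentially the same approach as the paper's own proof: invoke Theorem~\ref{thm:stab1}, then argue by induction on the loop index that the $\epsilon$-run on $\fb'$ and the exact run on $\fb$ are structurally identical, using that all intermediate coefficients are continuous (rational) functions of the input and that $\gamma_\epsilon$ together with the $\epsilon$-threshold forces the same combinatorial choices (supports, pivots, zero-tests). One minor quantitative slip: in step~(iv) the inequality ``$<c_0/4<\epsilon$'' is incompatible with your earlier choice $\epsilon<c_0/2$ for arbitrarily small $\epsilon$; as you yourself correctly observe in the obstacle paragraph, the perturbation bound must be taken $<\epsilon$ (so $\nu=\nu(\epsilon,\fb)$), not the fixed $c_0/4$.
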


\begin{proof}
  By Theorem \ref{thm:stab1}, $B$ is also a basis of the quotient
  algebra $R/(\fb')$.  This basis $B$ is obtained by applying
  Algorithm \cite{mt:gnfpss-issac-05} to $\fb$. The result of this
  algorithm does not change for the system $\fb$ if we replace the
  choice function $\gamma$ by $\gamma_{\epsilon}$ for $\epsilon>0$
  small enough (eg. smaller than the minimum of the norm of the
  coefficients of the polynomials on which $\gamma$ is applied).

  Let us show by induction on the loop index $k$ of the algorithm that the
  steps and the polynomials computed by the $\epsilon$-algorithm with
  $\gamma_\epsilon$ are the same as for the direct algorithm, up to the terms
  of norm smaller that $\epsilon$. 

  It is true obviously for the first step $k=1$.  Let us suppose now that
  steps $1,\ldots, k'$ of Algorithm \cite{mt:gnfpss-issac-05} ran on $\fb$
  and its $\epsilon$ variant ran on $\fb'$ are structurally the same and let
  us show that step $k'+1$ is also structurally the same for the two
  computations. The coefficients of all these constructed polynomials are
  rational functions of the coefficients of $\fb'$, which are well defined in
  a neighborhood of $\fb$. If $\fb'$ is close enough to $\fb$, the monomials
  which are in the support of the constructed polynomials for $\fb'$ but not in
  the support of the constructed polynomials for $\fb$, have coefficients
  of norm smaller then $\epsilon$.

  If, by hypothesis, the first $k'$ steps are structurally identical, the
  same polynomials, up to terms of norm smaller than $\epsilon$, appear when
  selecting the polynomials in $P^{+}$, and the same $C$-polynomials are
  constructed (see Section \ref{sec:algo} and \cite{mt:gnfpss-issac-05}). By
  continuity of the coefficients of the constructed polynomials, the {\em
  same} pivots (of norm bigger than $\epsilon$) are used to construct the new
  elements in $\tilde{P}$. Similarly, in a neighborhood of $\fb$, up to terms
  of small norm, the $C$-polynomials not reducing to zero are the same for
  the two computations. If choices of leading monomials are to be performed,
  then, by Definition \ref{def:gamma-support}, $\gamma_{\epsilon}$ will
  select the same monomials for $\fb'$ and $\fb$. Finally at the end of step
  $k'+1$ the same computations are performed, up to terms of small norm and
  the coefficients of the new constructed polynomials are rational functions
  of the coefficients of $\fb'$, which are well defined in a neighborhood of
  $\fb$.

  This ends the induction showing that the two computations are structurally
  identical for $\fb'\in N_{\nu}(\fb)$ with $\nu>0$ small enough. Hence
  $B$ will also be found as a basis of $R/(\fb')$ with this
  $\epsilon$-algorithm.
\end{proof}

The rewriting rules obtained from the $\epsilon$-algorithm are close to the
exact rewriting rules of the system $\fb'$. Their numerical quality can be
improved by iterative refinements such as Newton-like iterations using the
commutation relations. Such approach has been investigated in
\cite{Stetter04}.

\begin{remark}[Numerical certification]
Theorem \ref{thm:stab2} shows the continuity of the normal form computation with
respect to the coefficient of the input system. It states that there exists a
region of {\em stability} for the computed quotient algebra representation,
but it is an open problem to compute apriori the value $\epsilon$ and
$\nu$ for a given polynomial system in order to control the size of the
allowed perturbations. This problem is the subject of further work.
\end{remark}
\begin{remark}[Flatness]
Theorem \ref{thm:stab2} also shows that if we consider a rationally
parametrized family of systems $\fb_{t} \in N_{\nu}(\fb)$ for all $t\in
[0,1]$, such that
$\fb_{0}=\fb$ and $F_{0}$ is the border basis for $B$, then the set $B$ is
also a basis of $\Ac_{t}=R/(\fb_{t})$.  Moreover, the border basis $F_{t}$ of
$\fb_{t}$ for $B$ is of the form 
$F_{t}= (\omega - \rho_{\omega,t})_{\omega\in \partial B}$, 
where $\rho_{\omega,t}\in \<B\>$ is a continuous (rational) function of $t$
on $[0,1]$ such that $F_{0}=(\omega - \rho_{0,\omega})_{\omega\in \partial
B}$.  This also implies that the $C$-relations \eqref{syz:next-door} and
\eqref{syz:across-street} generating the syzygies of $F_{t}$ are continuous (rational)
functions of $t\in [0,1]$, which coincide with the $C$-polynomials relations of $F_{0}$
at $t=0$. Consequently any syzygy of $F_{0}$ which is a combination of the
$C$-relations can be deformed continuously into a syzygy of $F_{t}$ (for
$t\in [0,1]$).  In other words, the systems $\fb'$ in the neighborhood
$N_{\nu}(\fb)$ are {\em flat deformations} of the system $\fb$
\cite{Eis94}.
\end{remark}

\section{Experimentation}
The algorithm described in the previous section is implemented in the library
\textsc{Synaps}\footnote{\texttt{http://www-sop.inria.fr/galaad/synaps/}}.
It corresponds to about 50 000 lines of C++-code. It involves a direct sparse
matrix solver. The numerical approximation of the roots are obtained by
eigenvalues computation, using the library {\sc lapack} (the routine
\texttt{zgees}) and the strategy described in \cite{CGT97}.
The computations are performed on an AMD-Athlon 2400+ with $256MB$ of
main memory. We show the results obtained with our implementation in
the case where the grading that we use for $\kx$ is the usual one.
In the sequel, $drvl$ will refer to the choice function associated
to the Degree Reverse Lexicographical order, $dlex$ to the degree
lexicographical order, 
$Mac$ to Macaulay's choice function (see Example \ref{macfun}), $minsz$ to the
choice function over the rational that minimizes the memory needed in
the reduction loop (this choice only minimizes a local step and does
not insure local minimality of the global required memory), and $mix$ to the
choice function that returns randomly
either the result of $minsz$, or of $drvl$ applied to
its input. To analyse the quality of approximation, we mesure 
the maximal norm at the computed roots of the initial polynomials $f_{i}$ and
denote it hereafter by {\it mnacr}.

\subsection{Generic equations}
The method we propose here is an extension of the Gr\"obner bases
computations. As such it can compute Gr\"obner bases. The
implementation we have is not as optimized as the Gr\"obner bases ones
that are being worked on for decades. An important
work, mostly on linear algebra, remains to be done on our program.
However we want to show that the method we propose here is competitive,
and that it does not lose the good practical efficiency of Gr\"obner
bases computations \cite{Faug99}. As the arithmetic used in our
programs for doing exact computation is the rational arithmetic of
{\sc gmp}, which is much slower than integer computations used in the
other software we will restrict our-self to the use of modular
arithmetic.  The family of examples we have chosen is the
$Katsura(n)$\footnote{\texttt{http://www-sop.inria.fr/galaad/data/}}
equations. These equations are projective complete intersection with
no zero at infinity. Using the Macaulay choice function, we know
apriori that Macaulay's basis will be a monomial basis of the quotient
algebra, so we know, apriori, what monomials will be leading monomials
for the whole computation; so in this case we can guarantee that no
test to $0$ returns erroneous result even using floating point
arithmetic.  We compare first our program to one of the best
implementations available, Magma's implementation of $F_4$ algorithm
\cite{Faug99}.  {\small
  \begin{center}\small
\begin{tabular}[]{|c|c|c|c|c|c|c|}\hline
n&\multicolumn{2}{c|}{Synaps
  mac}&\multicolumn{2}{c|}{Synaps drvl}&\multicolumn{2}{c|}{Magma drvl}\\
\hline
    7 &  0.19s & 3M & 0.22s & 3M & 0.05s & 3M\\
    \hline
    9 & 6.17s & 5M & 8.44s & 5M & 1.670s & 7M\\
    \hline
    10 & 32.39s & 14M & 56.84s & 13M & 13.50s & 23M\\
    \hline
    11 & 252.05s & 50M & 387.97s  & 45M & 96.76s & 70M\\
    \hline
    12 & 1935.25s & 191M & 3072.08s  & 157M & 1560.76s & 240M\\
    \hline

\end{tabular}
\end{center} 
}

Let us mention that Gb, one of the reference implementation
of Buchberger's algorithm, spends {\small $659s$} on {\small $Katsura(10)$}.

Numerically we observe that choosing the {\it mac} function also
results in a better conditioning of the computations. More precisely
on {\em \small Katsura(6)}, and using a threshold of $10^{-10}$ we have:
{\small
\begin{center} \small
\begin{tabular}{|c|c|c|c|c|c|c|c|c|c|}\hline
$\gamma$    & drvl     & dlex      &  mac & drvl  & dlex      &  mac & drvl
& dlex      &  mac\\  \hline
\# bits  & 128      & 128       & 128  &  80      &  80       &  80  &  64      &  64       &  64 \\\hline
 time           & 1.98s    & 2.62s     & 1.64s& 1.35s    &  3.98s    &  0.95s
& $-$ & $-$ & 0.9s \\ \hline
mnar   
                &$10^{-28}$&$10^{-24}$ & $10^{-30}$
&$10^{-20}$&$10^{-15}$&$10^{-19}$& $-$& $-$& $10^{-11}$\\
\hline
\end{tabular}
\end{center}
}
For the $64$ bits computation the results computed for the {\it drvl} and
{\it dlex} orders are erroneous due to roundoff errors.
The time given is the time spent in the computation of the
multiplication matrices. Afterward, we used either {\sc lapack} to
perform the eigenvector computations or Maple when we needed extended
precision. Because of the different nature of these tools, we do not
report on the solving part timing.
Finally we show here the amount of memory needed to perform the computations
over $\qq$, using {\sc gmp}  {\tt mpq}. 

{\small
\begin{center}
\begin{tabular}{|c|c|c|c|c|}\hline
 &mac & minsz & drvl & mix \\
\hline
time &  4.22s & 30.21s & 6.54s & 7.83s \\
    \hline
 size & 4.2M & 6.1M & 4.4M & 4.9M\\
\hline
\end{tabular} 
\end{center}}
On these experiments, we observe that the local strategy {\it minsz} which tends to
minimize locally the size of the coefficients in the linear algebra
operations, does not yield globally the optimal output size. In this example,
the time and the memory size seem to be correlated.
\subsection{Parallel robot}\label{ROB}
Let us consider the famous direct kinematic problem of the parallel
robot\footnote{\texttt{http://www-sop.inria.fr/galaad/data/}}
\cite{MB93issac}. 
First we use floating point numbers to check the numerical requirements of
the computations for different orders. For testing a number to be $0$, we will
use a leveling (here $10^{-8}$ is enough) and we will check afterward that the
choices performed are the same as those done using modular arithmetic. This
is equivalent to the use of an hybrid arithmetic \cite{GBBIB433}.
{\small
\begin{center}
\begin{tabular}{|c|c|c|c|}\hline
$\gamma$ & \# bits &  time & mnacr \\
\hline
drvl& 128 & 2.07s& $0.3*10^{-24}$\\
\hline
dlex& 128 & 4.27s& $0.3*10^{-23}$\\
\hline
mac& 128 &2.22s & $0.1*10^{-24}$\\
\hline
\end{tabular}
\end{center}
}
Here we see that choosing the right choice function can increase (but not so
much in this case) the numerical accuracy of the roots. Hereafter we
use the parametrization of \cite{Laz92} for solving, it involves
more variables, gives better timings but less correct digits on the
final result.

{\small
\begin{center}
\begin{tabular}{|c|c|c|}\hline
\# bits &  time & mnacr\\
\hline
250 &  1.32s& $10^{-63}$ \\
\hline
500 &  2.23s & $10^{-140}$ \\
\hline
\end{tabular}\end{center}}

Finally we performed tests using rational arithmetic. 

{\small
\begin{center}
\begin{tabular}{|c|c|c|c|c|}\hline
$\gamma$ &mac & minsz & drvl & mix\\
\hline
 time &  315s & 229.08s & 201.65s & 257.50 \\
    \hline
 size & 17M & 14M & 16M & 13M\\
     \hline
\end{tabular}
\end{center}}

In fact, it is not so surprising to see that the choice function $\gamma$ has
a big impact in terms of the computational time and of the memory
required.
However in this problem, the time and the memory size do not seem
to be correlated as in the previous case. 

We also mention here that over-constraining the system can result in a
dramatic decrease of the computation time. Indeed expressing more constraints
than necessary can simplify computations
significantly (see \cite{mt:gnfpss-issac-05}).

\noindent{}{\bf Acknowledgments:} We thank A. Quadrat for interesting discussions
on differential algebra, prolongation and involutivity. 

\negskip


\end{document}